\newtheorem{theorem}{Theorem}[section]
{\theorembodyfont{\upshape}\newtheorem{example}[theorem]{Example}}
\newtheorem{definition}[theorem]{Definition}
\newtheorem{proposition}[theorem]{Proposition}
\newtheorem{corollary}[theorem]{Corollary}
\newtheorem{lemma}[theorem]{Lemma}
\newenvironment{proof}%
   {\begin{trivlist}\item[]\textbf{Proof.}}%
   {\end{trivlist}}
\newenvironment{proofsketch}%
  {\begin{trivlist}\item[]\textbf{Proof Sketch.}}%
  {\end{trivlist}}
\def\punto{\hspace*{\fill}\Box}
\newcommand{\nopAbduction}[1]{}
\newcommand{\nop}[1]{}
\def\Xplus{X^+}
\def\lhs{{\it lhs}}
\def\rhs{{\it rhs}}
\def\lh{{\it lh}}
\def\rh{{\it rh}}
\def\fd{{\it fd}}
\def\att{{\it att}}
\def\DC{\Delta C}
\def\FY{FY}
\def\FYone{FY_1}
\def\FYtwo{FY_2}
\def\FC{{\it FC}}
\def\C{C^o}
\def\At{{\it At}}
\def\Fd{{\it Fd}}
\def\consistent{{\it consistent}}
\def\outside{{\it outside}}
\def\unique{{\it unique}}
\def\FD{{\it FD}}
\def\Att{{\it Att}}
\def\YY{{\cal Y}}
\def\EY{\hat{Y}}
\def\EC{\hat{C}^o}
\def\EFC{\hat{FC}}
\def\ER{\hat{R}}
\def\EG{\hat{G}}
\def\EB{\hat{B}}
\def\primality{{\it prime}}
\def\RR{{\cal R}}
\def\GG{{\cal G}}
\def\AA{{\cal A}}
\def\BB{{\cal B}}
\def\EE{{\cal E}}
\def\RAi{R_i^\AA}
\def\RAone{R_1^\AA}
\def\RAk{R_K^\AA}
\def\tw{{\it tw}}
\def\dom{{\it dom}}
\def\root{{\it root}}
\def\leaf{{\it leaf}}
\def\child{{\it child}}
\def\bag{{\it bag}}
\def\tauTD{\tau_{td}}
\def\AATD{\AA_{td}}
\def\PP{{\cal P}}
\def\max{{\it max}}
\def\solve{{\it solve}}
\def\solveDx{{\it solve}$\downarrow$\xspace}
\def\solveD{{\it solve}$\downarrow$}
\def\succ{{\it success}}
\def\partition{{\it partition}}
\def\allowed{{\it allowed}}
\def\d-union{{\it disjoint\_union}}
\def\domA{{\it dom}({\cal A})}
\def\ThetaUp{\Theta^\uparrow}
\def\ThetaD{\Theta^\downarrow}
\def\bag{{\it bag}}
\def\child{{\it child}}
\def\mequiv{\equiv^{\it MSO}_k}
\def\A{{\cal A}}              % finite structure
\def\B{{\cal B}}              % finite structure
\def\S{{\cal S}}              % tree decomposition of $\A$
\def\T{{\cal T}}              % tree decomposition of $\B$
\def\I{{\cal I}}
\def\S{{\cal S}}
\def\R{{\cal R}}
\def\P{{\cal P}}              % monadic datalog program$
\def\ASs{{\langle {\cal A}, {\cal S}, s \rangle}}
\def\ASsp{{\langle {\cal A}, {\cal S}', s' \rangle}}
\def\ASspp{{\langle {\cal A}', {\cal S}', s' \rangle}}
\def\ATs{{\langle {\cal A}, {\cal T}, s \rangle}}
\def\BTt{{\langle {\cal B}, {\cal T}, t \rangle}}
  \newcommand{\ra}{\rightarrow}
  \newcommand{\la}{\leftarrow}
  \newcommand{\LR}{\Leftrightarrow}
  \renewcommand{\phi}{\varphi}
  \renewcommand{\theta}{\vartheta}
\def\TT{{\cal T}}
\def\Ts{{\cal T}_s}
\def\SS{{\cal S}}
\def\root{{\it root}}
\def\leaf{{\it leaf}}
\begin{document}

\title{Monadic Datalog over Finite Structures \\
 with Bounded Treewidth\thanks{This is an extended and enhanced version of results published in 
[19]. The work was partially supported by the Austrian Science Fund (FWF), project P20704-N18.}}

\author{
\parbox{6cm}{
\mbox{}
\hspace{-3cm}
\begin{tabular}[t]{c@{\extracolsep{1em}}c}
Georg Gottlob&
Reinhard Pichler\\
Computing Laboratory & 
Institut f\"{u}r Informationssysteme \\
Oxford University &
Technische Universit\"{a}t Wien\\
Oxford OX1 3QD,UK & 
A-1040 Vienna, Austria
\\
georg.gottlob@comlab.ox.ac.uk &
pichler@dbai.tuwien.ac.at 
\end{tabular} 
\\[1.1ex]
\begin{tabular}[t]{c}
Fang Wei\thanks{Work performed while the author was with Technische Universit\"{a}t Wien.} \\
Institut f\"{u}r Informatik \\
 Albert-Ludwigs-Universit\"{a}t Freiburg \\
D-79110 Freiburg i. Br., Germany \\
fwei@informatik.uni-freiburg.de 
\end{tabular}
}%parbox
}

\date{}

\maketitle

\begin{abstract}
Bounded treewidth and Monadic Second Order (MSO) logic 
have proved to be
key concepts in establishing fixed-para\-meter tractability
results. Indeed, by Courcelle's Theorem we know: 
Any property of finite structures, which is expressible by an MSO sentence, can be decided in linear time (data complexity) if the 
structures have bounded treewidth.
In principle, Courcelle's Theorem can be applied directly to 
construct concrete algorithms by transforming the MSO evaluation 
problem into a tree language recognition problem. The latter can then be 
solved via a finite tree automaton (FTA). However, this approach 
has turned out to be problematical, since even relatively simple MSO formulae may lead to a 
``state explosion'' of the FTA.

\hspace{2mm}
In this work we propose monadic datalog (i.e., datalog where all
intentional predicate symbols are unary) as an alternative method to
tackle this class of fixed-parameter tractable problems. 
We show that if some property of finite structures is expressible in MSO then
this property can also be expressed 
by means of a monadic datalog program over the structure plus the tree
decomposition. 
Moreover, we show that the resulting fragment of datalog
can be evaluated in linear time (both w.r.t.\  the program size and
w.r.t.\ the data size). This new approach is 
put to work by devising new algorithms for the 
3-Colorability problem of graphs and for the 
PRIMALITY problem of relational schemas 
(i.e., testing if
some attribute in a relational schema is
part of a key). We also report on experimental results 
with a prototype implementation.
\end{abstract}

\nop{****************
\vspace{1.5mm}
\noindent
{\bf Categories and Subject Descriptors:} 
%G.2.2 {[\bf Graph Theory]}: {Graph algorithms, Hypergraphs}
F.2.2 {\bf [Nonnumerical Algorithms and Problems]}: {
Complexity of proof procedures, Computations on discrete structures};
F.4.1 {\bf [Mathematical Logic]}: {
Computational Logic}

\vspace{1mm}
\noindent
{\bf General Terms:} Algorithms, Performance, Theory

\vspace{1mm}
\noindent
{\bf Keywords:} Tree decomposition,
Treewidth, Fixed-para\-meter tractability, 
Monadic Second Order Logic, Datalog
****************}

%%%%%%%%%%%%%%%%%%%%%%%%%%%%%%%%%%%%%%%%%%%%%%%%%%%%%%%%%%%%%%
%%%%%%%%%%%%%% Introduction %%%%%%%%%%%%%%%%%%%%%%%%%%%%%%%%%%
\section{Introduction}
\label{sec:intro}

\nocite{GottlobPW07PODS}

Over the past decade, parameterized complexity has evol\-ved as 
an important subdiscipline in the field of computational complexity,
see \cite{DF99,FG06}.
In particular, it has been shown that many hard problems become tractable
if some problem parameter is fixed or bounded by a constant. In the
arena of graphs and, more generally, of finite structures, the treewidth is one such parameter which has served as the
key to many fixed-parameter tractability (FPT) results. 
The most prominent 
method for establishing the FPT in case of bounded treewidth is 
via Courcelle's Theorem, see \cite{Courcelle1990}: 
Any property of finite structures, which is expressible by a Monadic Second Order (MSO) sentence, can be decided in linear time (data complexity) if the treewidth of the structures is bounded by a fixed constant.

Recipes as to how one can devise concrete algorithms based on 
Courcelle's Theorem can be found in the literature, see \cite{ALS91,flum-frick-grohe-JACM02}. The idea is to 
first translate the MSO evaluation problem over finite structures
into an equivalent MSO evaluation problem over colored binary trees. 
This problem can then be solved via the correspondence between MSO
over trees and finite tree automata (FTA), see \cite{ThatcherW68,Doner70}.
In theory, this generic method of turning an MSO description into
a concrete algorithm looks very appealing. However, in practice, it
has turned out that even relatively simple MSO formulae may lead to a 
``state explosion'' of the FTA, see \cite{Frick02,Maryns06}. Consequently, it
was already stated in \cite{Grohe99} that the algorithms 
derived via Courcelle's Theorem are ``useless for practical applications''. The main benefit of Courcelle's Theorem is that it
provides ``a simple way to recognize a property as being 
linear time computable''. In other words, proving the 
FPT of some problem by showing that it is MSO expressible is 
the starting point (rather than the end point) of the search for an efficient algorithm.

In this work we propose monadic datalog (i.e., datalog where all
intensional predicate symbols are unary) as a practical tool
for devising efficient algorithms in situations where the FPT  
has been shown via Courcelle's Theorem. Above all, we prove
that if some property of finite structures is expressible in MSO then
this property can also be expressed 
by means of a monadic datalog program over the structure plus the tree
decomposition. 
Hence, in the first place, we prove 
an {\em expressivity result\/} rather than a mere complexity result. 
However, we also show that the resulting fragment of datalog
can be evaluated in linear time (both w.r.t.\ the program size and
w.r.t.\  the data size). We thus get the corresponding 
{\em complexity result\/} (i.e., 
Courcelle's Theorem) as a corollary of this MSO-to-datalog transformation.

Our MSO-to-datalog transformation for 
finite structures with 
bounded treewidth generalizes a result from 
\cite{GottlobK04} where it was shown that MSO on trees has the same
expressive power as monadic datalog on trees. 
\nop{Note that this 
generalization is by no means straightforward (just like Courcelle's Theorem is not a straightforward generalization of the results 
in \cite{ThatcherW68} and \cite{Doner70}).
}%nop
Several obstacles
had to be overcome to prove this generalization:
\begin{itemize}
\setlength{\itemsep}{-0.3mm}
\item First of all, we no longer have to deal with a single universe,
namely the universe of trees whose domain consists of the tree nodes.
Instead,
we now have to deal with -- and constantly switch between -- two universes, 
namely the relational structure (with its own signature and its own domain) on the one hand, and the tree decomposition (with appropriate predicates
expressing the tree structure and with the tree nodes as a separate domain) on the other hand.
\item Of course, not only the MSO-to-datalog transformation itself had to 
be lifted to the case of two universes. Also important prerequisites
of the results in \cite{GottlobK04} (notably several results on MSO-equivalences of tree structures shown in \cite{NevenS02})  had to be extended to this new situation.
\item Apart from switching between the two universes, it is
ultimately necessary to integrate both universes into the monadic
datalog program. For this purpose, both the signature and the domain
of the finite structure have to be appropriately extended.
\item It has turned out that previous notions of standard or  normal forms of tree decompositions (see \cite{DF99,flum-frick-grohe-JACM02}) are not suitable for our purposes. We therefore have to introduce a
modified version of ``normalized tree decompositions'', 
which is then further 
refined as we present new algorithms based on monadic datalog.
\end{itemize}

In the second part of this paper, we put monadic datalog to
work by presenting  
new algorithms for the
3-Colorability problem of graphs and for the 
PRIMALITY problem of relational schemas 
(i.e., testing if
some attribute in a relational schema is
part of a key). Both problems are well-known to be intractable 
(e.g., see \cite{mannila-DB-Design-Book} for PRIMALITY).
It is folklore that the 3-Colorability problem 
can be expressed by an MSO sentence.
In \cite{GottlobPW06PODS}, it was shown that PRIMALITY is
MSO expressible. 
Hence, in case of bounded treewidth, both problems  become tractable.
However, two attempts to tackle these problems via the standard MSO-to-FTA 
approach turned out to be very problematical:
We experimented with a prototype implementation using MONA (see \cite{mona}) for the MSO model checking, but 
we ended up with ``out-of-memory'' errors already for really small input data (see Section~\ref{sec:Results}).  
Alternatively, we made an attempt 
to directly implement the MSO-to-FTA mapping proposed in 
\cite{flum-frick-grohe-JACM02}. However, the
 ``state explosion'' of the resulting FTA -- which tends to occur already for comparatively simple formulae
(cf.\ \cite{Maryns06}) -- led to failure yet before we were able to feed any input data to the program.

In contrast, the experimental results with our 
new
datalog approach 
look
very promising, see Section~\ref{sec:Results}. 
By the experience gained with these experiments, 
the
following advantages of datalog compared with MSO became apparent:
\begin{itemize}
\setlength{\itemsep}{-0.3mm}
\item Level of declarativity. MSO as a logic has the 
highest level of declarativity which often allows one
very elegant and succinct problem specifications.
However, MSO does not have an
operational semantics. In order to turn an MSO specification
into an algorithm, the standard approach 
% in the literature 
is to transform the MSO evaluation problem into a tree language
recognition problem. 
But the FTA clearly has a much lower level of declarativity and 
the intuition of the original problem is usually lost
when an FTA is constructed.
In contrast, the datalog program with its declarative style often
reflects both the {\em intuition of the original problem and of 
the algorithmic solution}. 
This intuition can be exploited for defining heuristics
which lead to problem-speci\-fic
optimizations.

\item General optimizations. 
A lot of research 
has been devoted 
to generally applicable (i.e., not problem-speci\-fic) 
optimization techniques of datalog (see e.g.\ \cite{CGT90}). In our 
implementation (see Section~\ref{sec:Results}), 
we make heavy use of these optimization techniques, which are not
available in the MSO-to-FTA approach.
\item Flexibility. The generic transformation of MSO formulae to 
 monadic datalog programs (given in Section~\ref{sec:monDatalog})
inevitably leads to 
programs of 
exponential size w.r.t.\ the size of the MSO-formula and the treewidth.
However,
as our programs for  
3-Colorability and PRIMALITY 
demonstrate, ma\-ny relevant 
properties can be expressed by really short programs. Moreover, as we will see in Section~\ref{sec:Work},
also datalog provides us with a {\em certain level of succinctness}. In fact, we
will be able to express a big monadic datalog program 
by a small non-monadic program.

\item Required transformations. The problem of a
``state explosion'' reported in \cite{Maryns06}
already refers to the transformation of (relatively simple) 
MSO formulae {\em on trees\/} to an FTA. 
If we consider MSO {\em on structures with bounded
treewidth\/} the situation gets even worse, since the original
(possibly simple) MSO formula over a finite structure first has to 
be transformed into an equivalent MSO formula over trees. This transformation
(e.g., by the algorithm in \cite{flum-frick-grohe-JACM02}) leads to a much 
more com\-plex for\-mula (in ge\-ne\-ral, even with additio\-nal quantifier alternations)
than the original formula.
In contrast, our approach works with 
monadic datalog programs on finite structures which need no further transformation. Each program can be executed as it is.

\item Extending the programming language. 
One more aspect of the flexibility
of datalog is the possibility to define new
built-in predicates if they admit an 
efficient implementation by the 
interpreter. Another example of a useful language extension is the
introduction of generalized quantifiers. For the theoretical background
of this concept, see \cite{EiterGV97b,EiterGV97a}.
\end{itemize}
Some applications require a fast execution which cannot always be
guaranteed by an interpreter. Hence, while
we propose a logic programming approach, one can of course go 
one step further and implement our algorithms directly in Java, C++, etc. 
following the same paradigm.

The paper is organized as follows. After recalling some basic notions and
results in Section~\ref{sec:preliminaries}, we prove several
results on the MSO-equivalence of substructures induced by subtrees of a
tree decomposition in Section~\ref{sec:generalization-NS}. In 
Section~\ref{sec:monDatalog}, it is shown that any MSO formula with 
one free individual variable over structures with bounded treewidth can be transformed into an equivalent monadic datalog program. 
% Moreover, we show that the resulting fragment of datalog can be evaluated in
% linear time both w.r.t.\ the size of the program and 
% w.r.t.\ the size of the data. 
In Section~\ref{sec:Work}, we put monadic datalog to work by 
presenting new FPT algorithms for the 
3-Colorability problem and for the 
PRIMALITY problem
in case of bounded treewidth. 
In Section~\ref{sec:Results}, we report on experimental results with
a prototype implementation. 
A conclusion is given in
Section~\ref{sec:Conclusion}.

%%%%%%%%%%%%%%%%%%%%%%%%%%%%%%%%%%%%%%%%%%%%%%%%%%%%%%%%%%%%%%
%%%%%%%%%%%%%% Preliminaries %%%%%%%%%%%%%%%%%%%%%%%%%%%%%%%%%%
\section{Preliminaries}
\label{sec:preliminaries}

\subsection{Relational Schemas and Primality}
\label{sec:relational}

We briefly recall some basic notions and results from 
database design theory (for details, see 
\cite{mannila-DB-Design-Book}).
In particular, we shall define the PRIMALITY problem, which will serve as a running example throughout this paper.

A relational schema is denoted as $(R,F)$ where 
$R$ is the set of attributes, and $F$ the set of functional dependencies (FDs, for short) over $R$. W.l.o.g., we only consider FDs whose
right-hand side consists of 
a single attribute. Let $f\in F$ with $f\colon Y \ra A$. We refer
to $Y \subseteq R$ and $A \in R$ as $\lhs(f)$ and $\rhs(f)$, 
respectively. The intended meaning of an FD $f\colon Y \ra A$ is that, 
in any valid database instance of $(R,F)$, the value of the attribute $A$ is
uniquely determined by the value of the attributes in $Y$. It is convenient to
denote a set $\{A_1, A_2, \dots, A_n\}$ of attributes as a string
$A_1 A_2 \dots A_n$. For instance, we write $f\colon ab \ra c$ rather than 
$f\colon \{a, b\} \ra c$.

For any $X \subseteq R$, we 
write $\Xplus$ to denote the closure of $X$, i.e., the set
of all attributes determined by $X$. An attribute
$A$ is contained in $\Xplus$ iff either $A \in X$ or there 
exists a ``derivation sequence''  of $A$ from $X$ in $F$
of the form
$X \ra  X \cup \{A_{1} \} \ra X \cup \{A_{1}, A_{2} \} \ra
\dots \ra X \cup \{A_{1}, \dots, A_{n}  \}$, s.t.\
$A_n = A$ and for every $i \in \{1, \dots, n\}$, there
exists an FD $f_i \in F$ with 
$\lhs(f) \subseteq X \cup \{A_{1}, \dots, A_{i-1}  \}$ and
$\rhs(f) = A_i$. 

If $\Xplus = R$ then $X$ is called a \emph{superkey}.
If $X$ is minimal with this property, then $X$ is a \emph{key}. 
An attribute $A$ is called {\em prime}
%  in $(R,F)$, 
if it is contained in at least one key in $(R,F)$. 
An efficient algorithm for testing the primality 
of an attribute is crucial in 
database design since it is an indispensable prerequisite
for testing if a schema is in third normal form.
However, given a relational schema $(R,F)$ and an attribute
$A \in R$, it is NP-complete to test if $A$ is prime 
(cf.\ \cite{mannila-DB-Design-Book}).

We shall consider two variants of the PRIMALITY problem
in this paper (see Section \ref{sec:primality} and
\ref{sec:monadic-prime}, resp.): the decision problem (i.e,
given a relational schema $(R,F)$ and an attribute $A \in R$, is $A$ prime in $(R,F)$?) and the enumeration problem
(i.e,
given a relational schema $(R,F)$, compute all 
prime attributes in $(R,F)$).

\begin{example}
\label{bsp:rel-schema}
Consider the relational schema $(R,F)$ with $R = abcdeg$
and $F = \{ f_1\colon  ab \ra c, f_2\colon c \ra b$, $f_3\colon cd \ra e, 
f_4\colon de \ra g, f_5\colon g \ra e\}$.
It can be easily checked that there are two keys for the schema: $abd$ and $acd$.
Thus the attributes $a,b,c$ and $d$ are prime, while $e$ and $g$ are not prime.

\nop{Ein Beispiel fuer ein relationales Schema und alle prime attributes in diesem Schema angeben. Dabei die Attribute und FDs mit Kleinbuchstaben benennen, damit die Notation mit mit dem 
Programm konsistent ist
}
\end{example}

\subsection{Finite Structures and Treewidth}
\label{sec:treewidth}

Let $\tau = \{R_1, \dots, R_K\}$ be a set of predicate symbols.
A {\em finite structure\/}
 $\AA$ over $\tau$ (a {\em $\tau$-structure\/}, for short) 
is given by a finite domain $A = \domA$ and relations 
$\RAi \subseteq A^\alpha$, where $\alpha$ denotes the arity
of $R_i \in \tau$. A finite structure may also
be given in the form $(\AA, \bar{a})$ where, in addition to  $\AA$, 
we have distinguished elements $\bar{a} = (a_0, \dots, a_w)$ from 
$\domA$.
% the  domain $A$. 
Such distinguished elements are required for interpreting
 formulae with free variables.

A {\em tree decomposition\/} $\TT$ of a $\tau$-structure $\AA$
is defined as a pair $\langle T, (A_t)_{t\in T} \rangle$ where
$T$ is a tree and each $A_t$ is a subset of $A$
with the following properties: (1) Every $a \in A$ is contained
in some $A_t$. (2) For every $R_i \in \tau$ and every tuple $(a_1, \dots, a_\alpha) \in \RAi$, there exists some node $t \in T$
with $\{a_1, \dots, a_\alpha \} \subseteq A_t$. (3) For every 
$a \in A$, the set $\{t \mid a \in A_t\}$ induces a subtree of $T$.

The third condition is usually referred to as the {\em connectedness
condition}. The sets $A_t$ are called the {\em bags\/} 
(or {\em blocks\/}) of $\TT$.
The {\em width\/} of a tree decomposition
$\langle T, (A_t)_{t\in T} \rangle$ is defined as
$\max \{ |A_t| \mid t \in T \} - 1$. The {\em treewidth\/} of
$\AA$ is the minimal width of all tree decompositions of 
$\AA$. It is denoted as $\tw(\AA)$. Note that trees and forests
are precisely the structures with treewidth 1.

For given $w \geq 1$, it can be decided in linear time 
if some structure has treewidth $\leq w$. Moreover, in case of a positive 
answer, a tree decomposition of width $w$ can be computed in 
linear time, see  \cite{Bod96}.

In this paper, we assume that a relational schema $(R,F)$
is given as a $\tau$-structure with 
$\tau = \{\fd, \att, \lh$, $\rh
 \}$.
The intended meaning of these predicates is as follows: 
$\fd(f)$ means that $f$ is an FD and 
$\att(b)$ means that $b$ is an attribute. 
$\lh(b,f)$ (resp.\ $\rh(b,f)$) 
means that $b$ occurs in $\lhs(f)$ (resp.\ in $\rhs(f)$).
The treewidth of $(R,F)$ is then defined as the treewidth of 
this $\tau$-structure.

\begin{example}
\label{bsp:tree-decomposition}
Recall the 
relational schema $(R,F)$ with $R = abcdeg$
and $F = \{ f_1\colon  ab \ra c, f_2\colon c \ra b$, $f_3\colon cd \ra e, 
f_4\colon de \ra g, f_5\colon g \ra e\}$ from 
Example \ref{bsp:rel-schema}. This schema
is represented as the following $\tau$-structure with 
$\tau = \{\fd, \att, \lh, \rh
 \}$: ${\cal A} = (A, \fd^{\cal A}, 
\att^{\cal A}, \lh^{\cal A}, \rh^{\cal A})$
with $A = R$, 
$\fd^{\cal A} = \{f_1, f_2, f_3, f_4, f_5\}$, 
$\att^{\cal A}= \{a,b,c,d,e,g\}$,
$ \lh^{\cal A}= \{(a,f_1), (b,f_1), (c,f_2), (c,f_3), (d,f_3), (d,f_4), (e,f_4), (g,f_5)\}$, 
$\rh^{\cal A} = \{(c,f_1), (b,f_2)$, $(e,f_3), (g,f_4), (e,f_5)$.

A tree decomposition 
${\cal T}$
of this structure is given in Figure~\ref{fig:tree-decomp}.
Note that the maximal size of the bags in $\TT$ is $3$. Hence, the tree-width is
$\leq 2$.
On the other hand, 
it is easy to check that the tree-width of $\TT$ cannot be smaller than $2$: 
In order to see this, we consider the tuples in 
$ \lh^{\cal A}$ and $\rh^{\cal A}$ as edges of an undirected graph. Then the 
edges corresponding to 
$(b,f_1), (c,f_2) \in  \lh^{\cal A}$ and 
$(b,f_2), (c,f_1) \in  \rh^{\cal A}$ form a cycle in this graph. However, 
as we have recalled
above, only trees and forests have treewidth 1. 
The tree decomposition in Figure~\ref{fig:tree-decomp} is, therefore, optimal and we have
$\tw(F) = \tw(\AA) = 2$.

\nop{Relational schema als $\tau$-structure darstellen und Grafik mit einer (nicht normalisierten) Tree decomposition ${\cal T}$ angeben. Idealerweise im Beispiel dann erlaeutern, wieso die width ${\cal T}$ optimal ist und wir daher die treewidth von $(R,F)$ bekommen.} 
\end{example}

\begin{figure}[h]
\begin{center}
\includegraphics[scale=0.4]{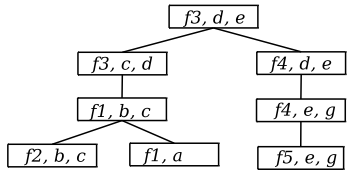}
\end{center}
\vspace{-0.5cm}
\caption{Tree decomposition ${\cal T}$ of 
schema $(R,F)$ in Example~\ref{bsp:rel-schema}}
\label{fig:tree-decomp}
\end{figure}

{\em Remark.}
A relational schema $(R,F)$ defines a hypergraph $H(R,F)$ 
whose vertices are the attributes in  $R$ and whose hyperedges 
are the sets of attributes jointly occurring in at least one
FD in $F$. Recall that the incidence graph of a hypergraph $H$
contains as nodes the vertices and hyperedges of $H$.
Moreover,
two nodes $v$ and $h$ (corresponding to a vertex $v$ and a 
hyperedge $h$ in $H$) 
are connected in this graph iff (in the hypergraph $H$) 
$v$ occurs in $h$.
It can be easily verified that the treewidth of the above described 
$\tau$-structure and 
of the incidence graph of the hypergraph $H(R,F)$ coincide.

In this paper, we consider the following form of 
{\em normalized tree decompositions\/}, which is
similar to the normal form introduced in Theorem 6.72 of 
\cite{DF99}:

\begin{definition}
\label{def:normalized}
Let  $\AA$ be an arbitry structure with 
tree decomposition $\TT$ of width $w$. 
We call $\TT$ {\em normalized\/} if the conditions 1 -- 4 are fulfilled:
(1) The bags are considered
as tuples of $w+1$ pairwise distinct elements $(a_0, \dots, a_w)$ rather than sets. 
(2) Every internal node $t \in T$ has 
either 1 or 2 child nodes. (3) If a node $t$ with bag
$(a_0, \dots, a_w)$ 
has one child node, then the bag of the child is either obtained
via a permutation of $(a_0, \dots, a_w)$  or by replacing
$a_0$ with another element $a_0'$. We call such a node $t$ 
a {\em permutation node\/} or an {\em element replacement node\/},
respectively.
(4) If a node $t$ has two child nodes then these child nodes have 
identical bags as $t$.  In this case, we call  $t$
a {\em branch node\/}. 
\end{definition}

\begin{proposition}
Let  $\AA$ be an arbitry structure with 
tree decomposition $\TT$ of width $w$. 
W.l.o.g., we may assume that the domain $\dom(\AA)$ has at least $w+1$ elements.
Then $\TT$ can be transformed in linear time into a normalized tree decomposition $\TT'$, s.t.\ $\TT$ and $\TT'$ have identical width.
\end{proposition}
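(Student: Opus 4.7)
The plan is to normalize $\TT$ in three phases: first, pad every bag to size exactly $w+1$; second, fix an arbitrary tuple ordering of each bag; third, restructure the tree so that conditions~(2)--(4) hold by inserting branch, permutation, and element-replacement nodes. Every intermediate object will remain a valid tree decomposition of width $w$, and each phase visits every original bag only a constant number of times (for fixed $w$).

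For the padding, I would re-root $T$ at some bag $B^{*}$ with $|B^{*}|=w+1$ (such a bag exists because $\TT$ has width $w$), and then, processing top-down, extend every bag $B_{t}$ with $|B_{t}|<w+1$ by copying in elements from its already padded parent until $|B_{t}|=w+1$. The parent always offers enough distinct elements; and because each copied element $e$ was already present in the adjacent parent bag, the subtree of bags containing $e$ is merely extended by one further adjacent node, so connectedness is preserved. An arbitrary ordering of each bag then yields the required tuples.

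For the restructuring, I would process each node $t$ with bag tuple $P$ and children $c_{1},\dots,c_{k}$ with bag tuples $C_{1},\dots,C_{k}$ in two moves. First, binarize: if $k\ge 2$, replace $t$ by a right-leaning chain of $k-1$ branch nodes all carrying tuple $P$, each with exactly two children of bag $P$ (either the next branch node, or a freshly inserted node of bag $P$ at the head of the interpolation chain down to some $c_{i}$). Second, interpolate: for every remaining edge whose parent tuple $P$ and child tuple $C$ differ, insert a path of permutation and element-replacement nodes by repeatedly permuting some $d_{i}\in D=P\setminus C$ into position $0$ and replacing it with a corresponding $a_{i}\in A=C\setminus P$ (we have $|D|=|A|$ since $|P|=|C|=w+1$), ending with one final permutation realigning the tuple to $C$.

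The only delicate point---and my expected main obstacle---is verifying connectedness along these inserted interpolation chains: an element of $P\cap C$ sits in every intermediate bag; each $d_{i}\in D$ occupies only a prefix of the chain (extending its subtree a few nodes downward from $P$); each $a_{i}\in A$ occupies only a suffix (extending its subtree a few nodes upward from $C$); and because the chain replaces a single original edge, no element's subtree becomes disconnected. The width is evidently preserved since every bag of $\TT'$ has exactly $w+1$ elements. For runtime, padding costs $O(w)$ per original bag, binarization of a $k$-ary node costs $O(k)$ (summing to $O(n)$ overall), and each original edge is replaced by a chain of $O(w)$ new nodes, so the whole construction runs in time $O(w\cdot n)$---linear, for fixed $w$.
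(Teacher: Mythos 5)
Your proof is correct and follows essentially the same route as the paper: pad all bags to size $w+1$ by propagating elements from an already-full neighbor, binarize high-degree nodes by inserting copies of the parent bag (yielding branch nodes with identical child bags), and subdivide each remaining edge into single-element steps realized by permutation nodes that rotate the outgoing element to position $0$ followed by element-replacement nodes. The only differences are presentational (you merge the paper's separate interpolation and permutation steps and re-root for the padding phase), and your explicit check of the connectedness condition along the inserted chains is if anything more careful than the paper's sketch.
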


\begin{proof}
We can 
transform an arbitrary tree decomposition $\TT$ into
a normalized tree decomposition $\TT'$
by the following steps (1) - (5). Clearly 
this transformation works in in linear time
and preserves the width.

(1) All bags can be padded to the ``full'' size of $w+1$ elements by adding 
elements from a neighboring bag, e.g.: Let $s$ and $s'$
be adjacent nodes and let $A_s$ have $w+1$ elements (in a tree decomposition of 
width $w$, at least one such node exists) and let $|A_{s'}| = w' + 1$ with
$w' < w$. Then $|A_s \setminus A_{s'}| \geq (w-w')$ and we may
simply add $(w-w')$ elements from $A_s \setminus A_{s'}$ to $A_{s'}$
without violating the 
connectedness condition. 
% of tree decompositions 
% is not violated by this transformation.

(2) Suppose that some internal node $s$ has $k+2$ child nodes $t_1, \dots, t_{k+2}$
with $k > 0$. It is a standard technique to turn this part of the tree into 
a binary tree by inserting copies of $s$ into the tree, i.e., we introduce $k$ 
nodes $s_1, \dots, s_{k}$ with $A_{s_i} = A_s$, s.t.\
the second child of $s$ is $s_1$, the second child of
$s_1$ is $s_2$, the second child of $s_2$ is $s_3$, etc. 
Moreover, $t_1$ remains the first child of $s$, while $t_2$ becomes the 
first child of $s_1$, $t_3$ becomes the 
first child of $s_2$, \dots, 
$t_{k+1}$ becomes the 
first child of $s_k$. Finally, $t_{k+2}$ becomes the second child of 
$s_{k}$. Clearly, the connectedness condition is preserved by this construction.

(3) If an internal node $s$ has two children $t_1$ and $t_2$, s.t.\  the
bags of $s$, $t_1$, and $t_2$ are not identical, then 
we simply insert a copy $s_1$ of $s$ between $s$ and $t_1$ and another
copy $s_2$ of $s$ between $s$ and $t_2$.

(4) Let $s$ be the parent of  $s'$ and let
 $|A_s \setminus A_{s'}| = k$
with $k > 1$. Then we can obviously ``interpolate'' $s$ and $s'$ by
new nodes $s_1, \dots, s_{k-1}$, s.t.\ $s_{k-1}$ is the new parent of $s'$, 
$s_{k-2}$ is the parent of $s_{k-1}$, \dots, $s$ is the parent of $s_1$.
Moreover, the bags $A_{s_i}$ can be defined in such a way that the bags of 
any two neighboring nodes differ in exactly one element, e.g. 
$|A_s \setminus A_{s_1}| =  |A_{s_1} \setminus A_{s}| = 1$. 

(5) Let the bags of any two neighboring nodes $s$ and $s'$ differ by  one element, i.e., $\exists a \in A_{s}$ with $a \not \in A_{s'}$ and
$\exists a' \in A_{s'}$ with $a' \not \in A_{s}$. Then we can insert two 
``interpolation nodes'' $t$ and $t'$, s.t.\  $A_t$ has the same elements as
$A_s$ but with $a$  at position $0$. Likewise, $A_{t'}$ has the same elements as
$A_{s'}$ but with  $a'$  at position $0$.
$\punto$
\end{proof}

\begin{example}
\label{bsp:normalized-tree-decomposition}
The tree decomposition ${\cal T}$ in Figure~\ref{fig:tree-decomp}
is clearly not normalized. In contrast, 
tree decomposition ${\cal T}'$ in Figure~\ref{fig:normalized-tree-decomp}
is normalized in the above sense. Let us ignore the node identifiers $s_1,\ldots,s_{22}$ for the moment. Note the 
${\cal T}$ and ${\cal T}'$ have identical
width.
\nop{Grafik mit einer normalisierten Tree decomposition ${\cal T}'$ angeben. } 
\end{example}

\begin{figure}[h]
\begin{center}
\includegraphics[scale=0.4]{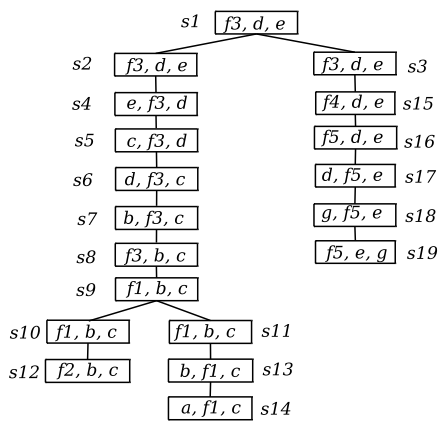}
\end{center}
\vspace{-0.5cm}
\caption{Normalized tree decomposition ${\cal T}'$ of 
schema $(R,F)$ in Example~\ref{bsp:rel-schema}}
\label{fig:normalized-tree-decomp}
\end{figure}

\subsection{Monadic Second Order Logic}
\label{sec:MSO}

We assume some familiarity with Monadic
Second Order logic (MSO), see e.g.\ 
\cite{EF99,Libkin04}.
MSO extends First Order logic (FO) by the use
of {\em set variables\/} (usually denoted by upper case letters), which range over sets of domain elements.
In contrast, the {\em individual variables\/} (which are usually denoted by
lower case letters)
range over single domain elements.
An FO-formula $\phi$ over a $\tau$-structure has as atomic formulae
either atoms with some predicate symbol from $\tau$ or equality atoms. 
An MSO-formula $\phi$ over a $\tau$-structure may additionally have
atoms whose predicate symbol is a monadic predicate variable. 
For the sake of readability, we denote such an atom usually as $a \in X$ rather than 
$X(a)$. Likewise, we use set operators $\subseteq$ and  $\subset$ with the obvious 
meaning. 

The {\em quantifier depth\/} of an MSO-formula $\phi$ is 
defined as the maximum degree of 
nesting of quantifiers (both for individual variables and set variables) 
in $\phi$.
In this work, we will mainly encounter MSO formulae with free individual
variables. A formula $\phi(x)$ with exactly one free individual variable
is called a {\em unary query\/}. 
More generally, let $\phi(\bar{x})$ with $\bar{x} = (x_0, \dots, x_w)$ for some 
$w \geq 0$ be an MSO formula with free variables $\bar{x}$.
Furthermore, let $\AA$ be a $\tau$-structure and 
$\bar{a} = (a_0, \dots, a_w)$ be distinguished domain elements. 
We write $(\AA,\bar{a}) \models \phi(\bar{x})$ to denote that
$\phi(\bar{a})$ evaluates to true in $\AA$. 
Usually, we
refer to $(\AA,\bar{a})$ simply as a ``structure'' rather than 
a ``structure with distinguished domain elements''.

\begin{example}
\label{bsp:primality-MSO}
It was shown in \cite{GottlobPW06PODS} that primality can be expressed in MSO. We give a slightly different MSO-formula $\phi(x)$  here, which is better suited for our purposes in Section~\ref{sec:Work}, namely

\smallskip

$\phi(x) = (\exists Y) [Y \subseteq R \wedge Closed (Y) \wedge x \not\in Y \wedge
Closure (Y \cup \{x\},R)]$ with 

$Closed(Y) \equiv 
(\forall f) [fd(f) \ra (\exists b)[
(\rh(b,f) \wedge b \in Y) \vee 
(\lh(b,f) \wedge b \not\in Y)]]$
and

$Closure (Y,Z) \equiv 
Y \subseteq Z \wedge Closed(Z) \wedge \neg (\exists Z') 
[Y \subseteq Z' \wedge Z' \subset Z \wedge
Closed(Z')]$.

\smallskip

\noindent
This formula expresses the following characterization of 
primality: 
An attribute $a$ is prime, iff there exists 
an attribute set $\YY \subseteq R$, s.t.\ $\YY$ is closed
w.r.t.\ $F$ (i.e., $\YY^+ = \YY$), $a \not\in \YY$ and
$(\YY \cup \{a\})^+ = R$. In other words, 
$\YY \cup \{a\}$ is a superkey but $\YY$ is not. 

\nop{Ergaenze Beispiel aus Example \ref{bsp:tree-decomposition}, z.B.:}

Recall the $\tau$-structure $\AA$ from Example
\ref{bsp:tree-decomposition} representing a 
relational schema. It can be 
easily verified that 
$(\AA, a) \models \phi(x)$ and
$(\AA, e) \not\models \phi(x)$ hold.
\end{example}

We call two structures $(\AA, \bar{a})$ and $(\BB, \bar{b})$
{\em $k$-equivalent\/} and write
$(\AA, \bar{a}) \mequiv (\BB, \bar{b})$,
iff for every MSO-formula $\phi$ of 
quantifier depth $\leq k$, the equivalence
$(\AA,\bar{a}) \models \phi \LR (\BB,\bar{b}) \models \phi$ holds.
By definition, $\mequiv$ is an equivalence relation. For any $k$, 
the relation $\mequiv$ has only finitely many equivalence classes. 
These equivalence classes are referred to as
{\em $k$-types} or simply as {\em types}.
The $\mequiv$-equivalence between two structures can be
effectively decided. 
There is a nice characterization 
of $\mequiv$-equivalence 
by Ehrenfeucht-Fra\"{\i}ss\'{e} games:
The 
{\em $k$-round MSO-game} 
on two structures $(\AA, \bar{a})$ and $(\BB, \bar{b})$ is
played between two players -- the spoiler and the duplicator. 
In each of the $k$ rounds, the spoiler can choose between 
a point move and a set move. If, in the $i$-th round, he makes a 
{\em point move}, then he selects some element $c_i \in \dom(\AA)$ or
some element $d_i \in \dom(\BB)$. The duplicator answers by choosing
an element in the opposite structure. If, in the $i$-th round, 
the spoiler makes a {\em set move}, 
then he selects a set $P_i \subseteq \dom(\AA)$ or
a set $Q_i \subseteq \dom(\BB)$.
The duplicator answers by choosing a set of domain 
elements in the opposite structure. 
Suppose that, in $k$ rounds, the domain elements
$c_1, \dots, c_m$ and $d_1, \dots, d_m$ from $\dom(\AA)$ 
and $\dom(\BB)$, respectively, were chosen in the point moves. Likewise, 
suppose that the subsets
$P_1, \dots, P_n$ and $Q_1, \dots, Q_m$ of $\dom(\AA)$ 
and $\dom(\BB)$, respectively, were chosen in the set moves. 
The {\em duplicator wins} this game, if the mapping which maps each $c_i$ to
$d_i$ is a partial isomorphism from 
$(\AA, \bar{a}, P_1, \dots, P_n)$
to $(\BB, \bar{b}, Q_1, \dots, Q_n)$.
We say that the duplicator has a {\em winning strategy}
in the $k$-round MSO-game on 
$(\AA, \bar{a})$ and $(\BB, \bar{b})$
if he can win the game for any possible moves
of the spoiler.

The following relationship between $\mequiv$-equivalence and
$k$-round MSO-games holds: {\em Two structures $(\AA, \bar{a})$ and $(\BB, \bar{b})$ are $k$-equivalent iff
the duplicator has a winning strategy in the $k$-round MSO-game on 
$(\AA, \bar{a})$ and $(\BB, \bar{b})$\/}, see \cite{EF99,Libkin04}.

\subsection{Datalog}
\label{sec:Datalog}

We assume some familiarity with datalog, see e.g.\ 
\cite{AHV95,CGT90, Ull89}.
Syntactically, a datalog program $\P$ is a set
of function-free Horn clauses. The (minimal-model) semantics  
can be defined as the least fixpoint of applying the immediate
consequence operator. Predicates occurring only in the body of 
rules in $\P$ are called extensional, while predicates occurring
also in the head of some rule are called intensional.

Let $\AA$ be a $\tau$-structure with domain $A$ and
relations $\RAone$, $\dots, \RAk$ with 
$\RAi \subseteq A^\alpha$, where $\alpha$ denotes the arity
of $R_i \in \tau$. In the context of datalog, it is convenient to 
think of the relations $\RAi$ as sets of ground atoms. The 
set of all such ground atoms of a structure $\AA$ is referred to as
the extensional database (EDB) of $\AA$, 
which we shall denote as $\EE(\AA)$ (or simply as $\AA$, if no confusion
is possible). We have
$R_i(\bar{a}) \in \EE(\AA)$ iff $\bar{a} \in \RAi$.

Evaluating a datalog program $\P$ over a structure $\AA$ comes down
to computing the least fixpoint of $\P \cup \AA$. Concerning the 
complexity of datalog, we are mainly interested in the combined
complexity (i.e., the complexity w.r.t.\ the size of the program $\P$
plus the size of the data $\AA$). 
In general, the combined complexity of datalog is EXPTIME-complete (implicit
in \cite{Vardi82}). However, 
there are some fragments which can be evaluated much more efficiently. 
(1) {\em Propositional datalog\/} (i.e., all rules are ground) can be evaluated
in linear time (combined complexity), see \cite{DowlingG84,Minoux88}.
(2) The {\em guarded fragment\/} of datalog
(i.e., every rule $r$ contains an extensional atom $B$ in the body, 
s.t.\ all variables occurring 
in $r$ also occur in $B$) can be evaluated in time ${\cal O}(|\P| * |\AA|)$.
(3) {\em Monadic datalog\/} (i.e., all intensional predicates are unary) is NP-complete (combined complexity), see \cite{GottlobK04}.

\section{Induced substructures}
\label{sec:generalization-NS}

In this section, we study the $k$-types of 
substructures induced by certain subtrees of a tree decomposition 
(see Definitions~\ref{def:subtree-envelope} and \ref{def:induced}).
Moreover, 
it is convenient to introduce 
some additional notation in 
Definition~\ref{def:equivalent-tuples}
below.
%& The subtrees and substructures 
% we are interested in are of the following form:

% thus generalizing results from \cite{NevenS02}.
%

\begin{definition}
\label{def:subtree-envelope}
Let $T$ be a tree and $t$ a node in $T$. Then we denote the {\em subtree rooted at $t$} as $T_t$. Moreover, analogously to \cite{NevenS02}, we write $\bar{T}_t$ to denote the 
{\em envelope of $T_t$}. This envelope is obtained by removing all of $T_t$ \,from $T$ except for the node $t$. 

Likewise, let $\T = \langle T, (A_s)_{s\in T} \rangle$ be a tree decomposition
of a finite structure. Then we define 
$\T_t = \langle T_t, (A_s)_{s\in T_t} \rangle$ and
$\bar{\T}_t = \langle \bar{T}_t, (A_s)_{s\in \bar{T}_t} \rangle$.
\end{definition}
In other words, $t$ is the root
node in $T_t$ while, in $\bar{T}_t$, it is a leaf node. Clearly, 
the only node occurring in both $T_t$ and $\bar{T}_t$ is $t$.

\begin{definition}
\label{def:induced}
Let $\A$ be a finite structure and let
$\T = \langle T, (A_t)_{t\in T} \rangle$
be a tree decomposition of $\A$.
Moreover, let 
$s$ be a node in $\T$ with bag 
$A_s = \bar{a} = (a_0, \dots, a_w)$
and let $\S$ be one of the subtrees 
$\T_s$ or $\bar{\T}_s$ of $\T$.

Then we write $\I(\AA, \S, s)$ to denote the 
structure $(\AA', \bar{a})$, 
where $\AA'$ is the substructure of $\A$ induced by the elements occurring in the bags of $\S$ .
\end{definition}

\begin{example}
\label{bsp:induced-substructure}

Recall the 
relational schema $(R,F)$ represented by the structure $\AA$ from Example \ref{bsp:tree-decomposition}
with (normalized) tree decomposition $\TT'$ in
Figure~\ref{fig:normalized-tree-decomp}. Consider,
for instance, the node $s$ in $\TT'$, as depicted in Figure \ref{fig:normalized-tree-decomp-induced},
with bag $A_s = (b,c)$.
Then the induced substructure 
$\I(\AA, \T'_s, s)$ is the substructure of $\AA$ which is induced by the elements occurring in the bags of $\T'_s$,
whereas $\I(\AA, \bar{\T}'_s, s)$ the substructure of $\AA$ which is induced by the elements occurring in the bags of $ \bar{\T}'_s$.
\nop{Irgendeinen Knoten aus der Tree decomposition $\TT'$ aussuchen und die beiden induced substructures
$\I(\AA, \T'_s, s)$ und
$\I(\AA, \bar{\T}'_s, s)$ angeben.}
\end{example}

\begin{figure}[h]
\begin{center}
\includegraphics[scale=0.4]{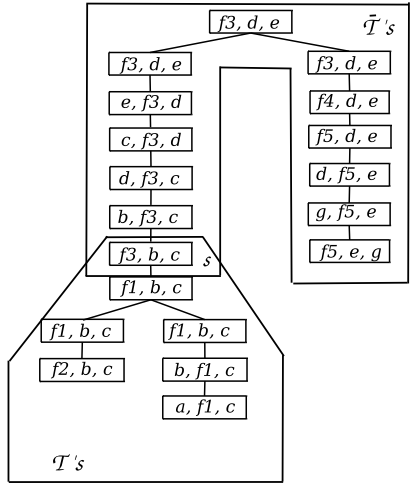}
\end{center}
\vspace{-0.5cm}
\caption{Induced substructures $\T'_s$ and $\bar{\T}'_s$ of the tree decomposition ${\cal T}$ w.r.t. the node $s$.}
\label{fig:normalized-tree-decomp-induced}
\end{figure}

\begin{definition}
\label{def:equivalent-tuples}
Let $w \geq 1$ be a natural number and let
$\A$ and $\B$ be finite structures over some signature $\tau$. 
Moreover, let $(a_0, \dots, a_w)$ (resp.\ $(b_0, \dots, b_w)$) be 
a tuple of pairwise distinct elements in $A$ (resp.\ $B$). 

We call
$(a_0, \dots, a_w)$ and $(b_0, \dots, b_w)$ 
{\em equivalent\/} and write
$(a_0, \dots, a_w) \equiv (b_0, \dots, b_w)$, iff for any predicate
symbol $R \in \tau$ with arity $\alpha$ and for
all tuples $(i_1, \dots, i_\alpha) \in \{0, \dots, w\}^\alpha$, the 
equivalence $R^\A(a_{i_1}, \dots, a_{i_\alpha})$ 
$\LR$ $R^\B(b_{i_1}, \dots, b_{i_\alpha})$ holds.
\end{definition}

We are now ready to generalize results
%Proposition 2.9 
from \cite{NevenS02} (dealing with trees plus a distinguished node) 
to the case of finite structures of bounded treewidth over an arbitrary signature $\tau$. In the three lemmas below, let $k \geq 0$ and $w \geq 1$ be arbitrary natural numbers and let $\tau$ be an arbitrary signature.

\begin{lemma}
% [bottom-up traversal of the tree decomposition]
\label{lem:folklore-bottomUp}
Let $\A$ and $\B$ be $\tau$-structures, let
$\S$ (resp.\ $\T$) be a normalized tree decomposition of $\A$ (resp.\ of $\B$)
of width $w$, and let $s$ (resp.\ $t$) be an internal node in $\S$ (resp.\ in $\T$).

\smallskip

\noindent
(1) {\em permutation nodes.} Let $s'$ (resp.\ $t'$) be the only child of $s$ in $\S$ (resp.\ of $t$ in $\T$). Moreover, let 
% $(a_0, \dots, a_w)$, $(a'_0, \dots, a'_w)$, 
% $(b_0, \dots, b_w)$, and $(b'_0, \dots, b'_w)$
$\bar{a}$, $\bar{a}'$, $\bar{b}$, and $\bar{b}'$ 
denote the bags at the nodes $s$, $s'$, $t$, and $t'$, respectively.

\smallskip

\noindent
If $\I(\AA,\S_{s'},s') \mequiv \I(\BB, \T_{t'},t')$ and there exists a permutation
$\pi$, s.t.\ 
$\bar{a} = \pi(\bar{a}')$ and
$\bar{b} = \pi(\bar{b}')$ 

then
$\I(\AA, \S_{s},s) \mequiv \I(\BB, \T_{t},t)$.

\medskip

\noindent
(2) {\em element replacement nodes.} Let $s'$ (resp.\ $t'$) be the 
only child  of $s$ in $\S$ (resp.\ of $t$ in $\T$). Moreover, let 
$\bar{a} = (a_0, a_1$, $\dots, a_w)$, 
$\bar{a}' = (a'_0, a_1, \dots, a_w)$, 
$\bar{b} = (b_0, b_1,$ $\dots, b_w)$, 
and $\bar{b}' = (b'_0, b_1, \dots, b_w)$
denote the bags at the nodes $s$, $s'$, $t$, and $t'$, respectively.

\smallskip

\noindent
If $\I(\AA,\S_{s'},s') \mequiv 
\I(\BB,\T_{t'},t') $ and 
$\bar{a}  \equiv \bar{b}$ then
$\I(\AA,\S_{s},s)$ $\mequiv 
\I(\BB,\T_{t},t) $.

\medskip

\noindent
(3) {\em branch nodes.} Let $s_1$ and $s_2$  (resp.\ $t_1$ and $t_2$) be the children of $s$ in $\S$ (resp.\ of $t$ in $\T$). 

\smallskip

\noindent
If 
$\I(\AA,\S_{s_1},s_1) \mequiv \I(\BB,\T_{t_1},t_1)$ 
and 
$\I(\AA,\S_{s_2},s_2)$ $\mequiv \I(\BB,\T_{t_2},t_2)$ 

then 
$\I(\AA,\S_{s},s) \mequiv \I(\BB,\T_{t},t)$.
\end{lemma}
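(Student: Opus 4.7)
The plan is to prove all three parts via the Ehrenfeucht--Fra\"{\i}ss\'{e} game characterization of $\mequiv$ recalled in Section~\ref{sec:MSO}: in each case we are given that the duplicator wins the $k$-round MSO-game on a smaller pair obtained by passing to children of $s$ and $t$, and we manufacture a winning strategy on the larger pair $(\I(\AA,\S_s,s), \I(\BB,\T_t,t))$. The tree-decomposition axioms -- in particular the connectedness condition and the covering property for relational tuples -- are what make the domains and atomic relations behave in a controlled way across the cut at $s$ and~$t$.

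For the permutation case the underlying relational structures of $\I(\AA,\S_s,s)$ and $\I(\AA,\S_{s'},s')$ literally coincide, because $\S_s$ and $\S_{s'}$ contain the same bags (the bag at $s$ is a permutation of the bag at $s'$, hence the same set of elements). The distinguished tuples $\bar{a}$ and $\bar{a}'$ differ only by the permutation $\pi$, and by hypothesis precisely the same $\pi$ relates $\bar{b}$ and $\bar{b}'$. Any winning strategy for the smaller game therefore serves unchanged: the final partial map is a partial isomorphism on the reindexed tuples exactly because $\pi$ acts identically on both sides.

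For the element-replacement case the domain of $\I(\AA,\S_s,s)$ is obtained from that of $\I(\AA,\S_{s'},s')$ by adjoining the single new element $a_0$ -- connectedness forces $a_0\notin A_{s'}$ and hence $a_0$ occurs in no bag of $\S_{s'}$ -- and similarly on the $\BB$-side with $b_0$. The hypothesis $\bar{a}\equiv\bar{b}$ ensures that the atomic $\tau$-relationships between $a_0$ and $a_1,\dots,a_w$ match those between $b_0$ and $b_1,\dots,b_w$, while the covering axiom guarantees that every $\tau$-tuple of $\I(\AA,\S_s,s)$ involving $a_0$ lives entirely inside the bag at $s$. The duplicator then extends the inherited strategy by the fixed response $a_0\leftrightarrow b_0$ on point moves hitting the new element, and for set moves independently decides membership of $a_0$ (respectively $b_0$) in the chosen set, passing the rest of the set down to the inherited strategy.

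The branch-node case is where the real work lies and is the step I expect to be the main obstacle, since it requires a genuine \emph{composition} of two winning strategies. Here the two child substructures $\I(\AA,\S_{s_1},s_1)$ and $\I(\AA,\S_{s_2},s_2)$ have the common bag $A_s=A_{s_1}=A_{s_2}$ as the intersection of their domains: any element outside this bag occurs, by connectedness, in only one of the two subtrees. I plan to run the two given strategies in parallel, classifying each point move of the spoiler according to the subtree it belongs to, and splitting each set move along the same partition before passing the two parts to the two strategies. The delicate case is an element of the common bag $A_s$, which participates in both games simultaneously; however, the distinguished tuples in $\I(\AA,\S_{s_i},s_i)$ and $\I(\BB,\T_{t_i},t_i)$ are precisely the bags $A_{s_i}$ and $A_{t_i}$, so both strategies treat these shared elements consistently via their tuple-indexed equivalences, and the two resulting partial isomorphisms agree on the overlap $A_s$. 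Their union is then a partial isomorphism between $\I(\AA,\S_s,s)$ and $\I(\BB,\T_t,t)$, because every $\tau$-tuple of $\AA$ appearing in $\I(\AA,\S_s,s)$ lies entirely inside some bag (condition~(2) of tree decompositions), hence entirely inside one of the two subtrees, and is therefore already preserved by one of the component strategies.
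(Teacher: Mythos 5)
Your proposal follows essentially the same route as the paper's proof: all three cases are handled by Ehrenfeucht--Fra\"{\i}ss\'{e} games, carrying over the inherited winning strategy unchanged for permutation nodes, extending it by the fixed correspondence $a_0\leftrightarrow b_0$ (with membership of $a_0$ mirrored by $b_0$ in set moves) for element replacement nodes, and composing the two child strategies along the shared bag for branch nodes, with the connectedness and covering conditions confining all relations that cross the cut to the bag so that $\bar{a}\equiv\bar{b}$ (resp.\ the child equivalences) guarantees the final partial isomorphism. Your additional observation that the two component strategies necessarily agree on the shared bag elements because these are the distinguished tuples is a correct sharpening of a point the paper leaves implicit.
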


\begin{proof}
\mbox{}
%\smallskip

\noindent
(1) Let $\I(\AA, \S_{s'}, s') \mequiv \I(\BB, \T_{t'}, t')$.
Hence, there exists 
a winning strategy of the duplicator on these structures. Moreover, 
$(a_0, \dots, a_w)$ and $(b_0, \dots, b_w)$ are obtained from 
$(a'_0, \dots, a'_w)$ resp.\ $(b'_0, \dots, b'_w)$ by identical permutations.
Thus the duplicator's winning strategy on the structures 
$\I(\AA, \S_{s'}, s')$ and $\I(\BB, \T_{t'}$, $t')$
is also a winning strategy on 
$\I(\AA, \S_{s}, s)$ and $\I(\BB, \T_{t}, t)$.

\medskip

\noindent
(2) Let 
$\I(\AA, \S_{s'}, s') \mequiv \I(\BB, \T_{t'}, t')$.
Hence, there exists 
a winning strategy of the duplicator on these structures. The duplicator 
extends this strategy to the structures 
$\I(\AA, \S_{s}, s)$ and $\I(\BB, \T_{t}, t)$
in the following way. (We only consider moves of the spoiler in 
$\I(\AA, \S_{s}, s)$.
Moves in $\I(\BB, \T_{t}, t)$ are treated analogously.)
Any point or set move which is  entirely in 
$\I(\AA, \S_{s'}, s')$ is answered according to
the winning strategy on the substructures
$\I(\AA, \S_{s'}, s')$ and $\I(\BB, \T_{t'}, t')$.
For moves involving $a_0$, we proceed as follows. If the duplicator picks
$a_0$ in a point move, then the duplicator answers with 
$b_0$. Likewise, if the spoiler makes a set move
of the form $P \cup \{a_0\}$, where $P$ is a subset of the 
elements in $\I(\AA, \S_{s'}, s')$
then the duplicator answers with 
$Q \cup \{b_0\}$, where $Q$ is the duplicator's answer to $P$ in the game
played on the substructures 
$\I(\AA, \S_{s'}, s')$ and $\I(\BB, \T_{t'}, t')$.

Let $c_1, \dots, c_m$ and $d_1, \dots, d_m$ be the elements selected
in point moves and $P_1, \dots, P_n$ and $Q_1, \dots, Q_n$ be the
sets selected in set moves. By the above definition of the duplicator's
strategy, every move involving $a_0$ is answered by the analogous
move involving $b_0$. For all other elements, the selected elements clearly
define a partial isomorphism on the structures 
$\I(\AA, \S_{s'}, s')$ and $\I(\BB, \T_{t'}, t')$ extended by the selected sets.
It remains  to verify that the selected elements also define a partial isomorphism
on the structures $\I(\AA, \S_{s}, s)$ and $\I(\BB, \T_{t}, t)$
extended by the selected sets.
In particular, we have to verify that all relations $R \in \tau$
are preserved by the selected elements.
For any tuples of elements not involving $a_0$ (resp.\ $b_0$), this is
guaranteed by the fact that the winning strategy on
$\I(\AA, \S_{s'}, s')$ and $\I(\BB, \T_{t'}, t')$
is taken over to the structures 
$\I(\AA, \S_{s}, s)$ and $\I(\BB, \T_{t}, t)$.
On the other hand, 
by the connectedness condition of tree decompositions, 
we can be sure that the only relations on 
$\I(\AA, \S_{s}, s)$ (resp.\ $\I(\BB, \T_{t}, t)$)
involving 
$a_0$ (resp.\ $b_0$) are with elements in the bag
$(a_0, \dots, a_w)$ (resp.\ $(b_0, \dots, b_w)$). But then, by the 
equivalence $(a_0, \dots, a_w) \equiv (b_0, \dots, b_w)$, 
the preservation of $R \in \tau$ is again guaranteed.

\medskip

\noindent
(3) By the definition of branch nodes, the three nodes
$s, s_1, s_2$  have identical bags, say $(a_0, \dots, a_w)$. 
In particular, 
since the bag of $s$ introduces no new elements,
all elements contained in 
$\I(\AA, \S_{s}, s)$
are either contained in 
$\I(\AA, \S_{s_1}, s_1)$
or in 
$\I(\AA, \S_{s_2}, s_2)$.
Moreover, by the connectedness condition, only the elements
$a_0, \dots, a_w$ occur in both substructures.
Of course, the analogous observation holds for $t, t_1, t_2$, and 
$\I(\BB, \T_{t}, t)$.

By assumption, 
$\I(\AA, \S_{s_1}, s_1) \mequiv 
\I(\BB, \T_{t_2}, t_2)$ and 
$\I(\AA, \S_{s_2}, s_2) \mequiv 
\I(\BB, \T_{t_2}, t_2)$.
We define the 
duplicator's strategy on 
$\I(\AA, \S_{s}, s)$ and
$\I(\BB, \T_{t}, t)$ by 
simply combining the winning strategies on the substructures in the obvious way (Again we only consider moves of the spoiler in $\I(\AA, \S_{s}, s)$), i.e., 
if the spoiler picks  some element $c$ of 
$\I(\AA, \S_{s}, s)$ then the chosen element $c$ is in 
$\I(\AA, \S_{s_i}, s_i)$ for some $i \in \{ 1,2\}$. Hence,  the duplicator simply answers according to his winning strategy in the game on 
$\I(\AA, \S_{s_i}, s_i)$ and $\I(\BB, \T_{t_i}, t_i)$.
On the other hand, suppose that 
the spoiler picks a set $P$. Then $P$ is of the form $P = P_1 \cup P_2$, where $P_i$ contains only elements in $\I(\AA, \S_{s_i}, s_i)$. Thus, 
the duplicator simply answers with $Q = Q_1 \cup Q_2$, where $Q_i$ is the answer to $P_i$ according to the winning strategy in the game on 
$\I(\AA, \S_{s_i}, s_i)$ and $\I(\BB, \T_{t_i}, t_i)$.

It remains to verify that the selected vertices indeed define a partial isomorphism
on the structures 
$\I(\AA, \S_{s}, s)$ and
$\I(\BB, \T_{t}, t)$ extended by the selected sets. Again, the only 
interesting point is that every relation $R \in \tau$ is 
preserved by the elements selected in the point moves. If all elements in 
a tuple $\bar{c}$ (resp.\ $\bar{d}$) come from the same
substructure $\I(\AA, \S_{s_i}, s_i)$ (resp.\ $\I(\BB, \T_{t_i}, t_i)$), then
this is clearly fulfilled due to the fact that 
the duplicator's winning strategy on the substructures
$\I(\AA, \S_{s_i}, s_i)$ and $\I(\BB, \T_{t_i}, t_i)$ is taken over unchanged
to the game on $\I(\AA, \S_{s}, s)$ and $\I(\BB, \T_{t}, t)$. On the other hand,
by the connectedness condition, 
% of tree decompositions, 
we can be sure that the only relations between elements from different 
substructures $\I(\AA, \S_{s_1}, s_1)$ and $\I(\AA, \S_{s_2}, s_2)$ 
(resp.\ $\I(\BB, \T_{t_1}, t_1)$ and $\I(\BB, \T_{t_2}, t_2)$) are 
with elements in the bag
$(a_0, \dots, a_w)$ (resp.\ $(b_0, \dots, b_w)$) of $s_1$, $s_2$, and $s$
(resp.\ $t_1$, $t_2$, and $t$).
But then, by the 
equivalences
$\I(\AA, \S_{s_1}, s_1) \mequiv 
\I(\BB, \T_{t_1}, t_1)$ and 
$\I(\AA, \S_{s_2}, s_2) \mequiv 
\I(\BB, \T_{t_2}, t_2)$, 
the preservation of $R \in \tau$ is again guaranteed.
$\punto$
\end{proof}

\begin{lemma}
% [top-down traversal of the tree decomposition]
\label{lem:folklore-topDown}
Let $\A$ and $\B$ be $\tau$-structures, let
$\S$ (resp.\ $\T$) be a normalized tree decomposition of $\A$ (resp.\ of $\B$)
of width $w$, and let $s$ (resp.\ $t$) be an internal node in $\S$ (resp.\ in $\T$).

\smallskip

\noindent
(1) {\em permutation nodes.} Let $s'$ (resp.\ $t'$) be the only child of $s$ in $\S$ (resp.\ of $t$ in $\T$). Moreover, let 
% $(a_0, \dots, a_w)$, $(a'_0, \dots, a'_w)$, 
% $(b_0, \dots, b_w)$, and $(b'_0, \dots, b'_w)$
$\bar{a}$, $\bar{a}'$, $\bar{b}$, and $\bar{b}'$ 
denote the bags at the nodes $s$, $s'$, $t$, and $t'$, respectively.

\smallskip

\noindent
If $\I(\AA,\bar{\S}_{s},s) \mequiv \I(\BB, \bar{\T}_{t},t)$ 
and there exists a permutation
$\pi$, s.t.
$\bar{a} = \pi(\bar{a}')$ and
$\bar{b} = \pi(\bar{b}')$ 

then
$\I(\AA, \bar{\S}_{s'},s') \mequiv \I(\BB, \bar{\T}_{t'},t')$.

\medskip

\noindent
(2) {\em element replacement nodes.} Let $s'$ (resp.\ $t'$) be the 
only child  of $s$ in $\S$ (resp.\ of $t$ in $\T$). Moreover, let 
$\bar{a} = (a_0, a_1$, $\dots, a_w)$, 
$\bar{a}' = (a'_0, a_1, \dots, a_w)$, 
$\bar{b} = (b_0, b_1,$ $\dots, b_w)$, 
and $\bar{b}' = (b'_0, b_1, \dots, b_w)$
denote the bags at the nodes $s$, $s'$, $t$, and $t'$, respectively.

\smallskip

\noindent
If $\I(\AA,\bar{\S}_{s},s) \mequiv 
\I(\BB,\bar{\T}_{t},t) $ and 
$\bar{a}'  \equiv \bar{b}'$ then
$\I(\AA,\bar{\S}_{s'},s')$ $\mequiv 
\I(\BB,\bar{\T}_{t'},t') $.

\medskip

\noindent
(3) {\em branch nodes.} Let $s_1$ and $s_2$  (resp.\ $t_1$ and $t_2$) be the children of $s$ in $\S$ (resp.\ of $t$ in $\T$). 

\smallskip

\noindent
If $\I(\AA,\bar{\S}_{s},s) \mequiv \I(\BB, \bar{\T}_{t},t)$ and
$\I(\AA,\S_{s_2},s_2) \mequiv \I(\BB$, $\T_{t_2},t_2)$  then

$\I(\AA,\bar{\S}_{s_1},s_1) \mequiv \I(\BB, \bar{\T}_{t_1},t_1)$. 

\smallskip

\noindent
If $\I(\AA,\bar{\S}_{s},s) \mequiv \I(\BB, \bar{\T}_{t},t)$ and
 $\I(\AA,\S_{s_1},s_1) \mequiv \I(\BB$, $T_{t_1},t_1)$  then

 $\I(\AA,\bar{\S}_{s_2},s_2) \mequiv \I(\BB, \bar{\T}_{t_2},t_2)$. 
\end{lemma}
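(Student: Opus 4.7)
The plan is to mirror the proof of Lemma~\ref{lem:folklore-bottomUp}, using the characterization of $\mequiv$ via the $k$-round MSO Ehrenfeucht-Fra\"{\i}ss\'{e} game, but now transporting the duplicator's winning strategy in the opposite direction along the tree decomposition. In each of the three cases I will extract a winning strategy on the child-envelope structures from the given winning strategy on the parent-envelope structures, together with the auxiliary hypothesis.

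For case (1), observe that, since the bag of $s'$ is a permutation of the bag of $s$, attaching $s'$ as the sole child of $s$ introduces no new element. Hence $\I(\AA,\bar{\S}_{s'},s')$ and $\I(\AA,\bar{\S}_s,s)$ share the same underlying relational structure and differ only in that the distinguished tuple is re-indexed by $\pi$ on both the $\A$-side and the $\B$-side. Any partial isomorphism that witnesses a position of the game on the parent-envelopes therefore also witnesses the corresponding position on the child-envelopes, so the given winning strategy works verbatim. For case (2), the connectedness condition forces $a'_0$ to be absent from every bag of $\bar{\S}_s$, so $\I(\AA,\bar{\S}_{s'},s')$ is obtained from $\I(\AA,\bar{\S}_s,s)$ by adjoining the single new element $a'_0$; moreover, any relational tuple on $a'_0$ must sit inside the bag of $s'$ and hence only involves $a'_0,a_1,\dots,a_w$. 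A symmetric statement holds on the $\B$-side. I will extend the given winning strategy by pairing $a'_0$ with $b'_0$ on point moves and by splitting off $\{a'_0\}$ (resp.\ $\{b'_0\}$) from any set move containing it. The assumption $\bar{a}'\equiv\bar{b}'$ supplies exactly what is needed for the new atomic facts involving $a'_0$ and $b'_0$ to be mirrored correctly, while the base strategy takes care of everything else.

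For case (3), I will combine the two given winning strategies. The substructure $\I(\AA,\bar{\S}_{s_1},s_1)$ decomposes as the union of $\I(\AA,\bar{\S}_s,s)$ and $\I(\AA,\S_{s_2},s_2)$, whose intersection is, by connectedness, precisely the common bag $\bar{a}$ --- the distinguished tuple in both pieces. A point move by the spoiler is routed to whichever side contains it and answered by the corresponding sub-strategy; a set move $P$ is split as $P=P_1\cup P_2$ with $P_1\subseteq\I(\AA,\bar{\S}_s,s)$ and $P_2\subseteq\I(\AA,\S_{s_2},s_2)$ and answered by $Q_1\cup Q_2$, where $Q_i$ is the answer supplied by the corresponding sub-strategy. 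Consistency on the shared bag elements is automatic, because both sub-strategies already start from the identification $a_i\mapsto b_i$ of the distinguished tuples. The second half of case (3), with the roles of $s_1$ and $s_2$ interchanged, is handled by the same argument.

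The main technical obstacle, exactly as in Lemma~\ref{lem:folklore-bottomUp}, is to verify that the patched strategies really produce partial isomorphisms, i.e.\ that every $R\in\tau$ is preserved by the selected elements. What makes this reduce to checking inside each piece separately is, once more, the connectedness condition of the tree decomposition: every relational tuple of $\A$ must fit inside a single bag, so in case (3) no relational tuple can mix a fresh element of $\S_{s_2}$ with a fresh element of $\bar{\S}_s$, and in case (2) no relational tuple can mix $a'_0$ with an element lying outside the bag of $s'$. Once this observation is in place, the verification is a routine gluing of partial isomorphisms along the distinguished tuple, dual to the one carried out in the proof of Lemma~\ref{lem:folklore-bottomUp}.
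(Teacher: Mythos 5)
Your proposal is correct and takes essentially the same route as the paper: the paper's own proof of Lemma~\ref{lem:folklore-topDown} consists only of the remark that it follows by Ehrenfeucht-Fra\"{\i}ss\'{e} games analogously to Lemma~\ref{lem:folklore-bottomUp}, and your argument fills in precisely those details (transporting the strategy through the permutation, pairing $a'_0$ with $b'_0$ at replacement nodes, and gluing the two strategies along the common bag at branch nodes, with the connectedness condition doing the work in the partial-isomorphism check). No gaps beyond the level of detail the paper itself leaves implicit.
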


\begin{proof} The proof is by 
Ehrenfeucht-Fra\"{\i}ss\'{e} games,  
analogously to 
Lemma~\ref{lem:folklore-bottomUp}
$\punto$
\end{proof}

\begin{lemma}
% [combining bottom-up and top-down]
\label{lem:gluing}
Let $\A$ and $\B$ be $\tau$-structures, let
$\S$ (resp.\ $\T$) be a normalized tree decomposition of $\A$ (resp.\ of $\B$)
of width $w$, and let $s$ (resp.\ $t$) be an arbitrary node in $\S$ (resp.\ in $\T$), whose bag is $(a_0, \dots, a_w)$ (resp.\ $(b_0, \dots, b_w)$).

\smallskip

\noindent
If $\I(\AA,\S_{s},s) \mequiv \I(\BB, \T_{t},t)$ and
$\I(\AA,\bar{\S}_{s},s) \mequiv \I(\BB$, $\bar{\T}_{t},t)$
then
$({\cal A}, a_i) \mequiv ({\cal B}, b_i)$ for every $i \in \{0, \dots, w\}$.

\end{lemma}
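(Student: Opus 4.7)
The plan is to prove the stronger equivalence $(\A, a_0, \dots, a_w) \mequiv (\B, b_0, \dots, b_w)$ by an Ehrenfeucht--Fra\"{\i}ss\'{e} game argument. The desired conclusion $(\A, a_i) \mequiv (\B, b_i)$ for each individual $i \in \{0,\dots,w\}$ will then follow immediately, since any MSO sentence (of quantifier depth $\le k$) over the signature enriched with a single constant that separated $(\A,a_i)$ from $(\B,b_i)$ would a fortiori separate the structures with the full distinguished tuples.

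Setting $(\A_1,\bar a) = \I(\A,\S_{s},s)$, $(\A_2,\bar a) = \I(\A,\bar\S_{s},s)$ and analogously $(\B_1,\bar b), (\B_2,\bar b)$, the hypothesis supplies the duplicator with winning strategies $\sigma_1,\sigma_2$ in the $k$-round MSO games on the two component pairs. The first step will be to record a structural ``gluing'' observation: by the connectedness condition of tree decompositions together with Definition~\ref{def:induced}, we have $\dom(\A) = \dom(\A_1) \cup \dom(\A_2)$ with $\dom(\A_1) \cap \dom(\A_2) = \{a_0,\dots,a_w\}$, and every tuple of every relation $R^\A$ is contained in some bag of $\S$, hence sits entirely inside $\A_1$ or entirely inside $\A_2$. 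The analogous statement holds on the $\B$-side with the bag $(b_0,\dots,b_w)$.

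Next I will combine $\sigma_1$ and $\sigma_2$ componentwise into a strategy $\sigma$ on $(\A,\bar a)$ vs $(\B,\bar b)$. For a point move where the spoiler picks $c \in \dom(\A_j)$, the duplicator answers by $\sigma_j$; if $c = a_i$ lies in the intersection, both $\sigma_1$ and $\sigma_2$ are forced to respond with $b_i$ because each certified partial isomorphism must send $\bar a$ to $\bar b$, so the responses coincide. For a set move $P \subseteq \dom(\A)$, the duplicator splits $P = P_1 \cup P_2$ with $P_j = P \cap \dom(\A_j)$, computes $Q_j = \sigma_j(P_j)$, and answers with $Q_1 \cup Q_2$; consistency on the overlap $\{b_0,\dots,b_w\}$ is automatic because each $\sigma_j$ preserves membership of the distinguished elements. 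Spoiler moves on the $\B$-side are handled symmetrically.

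The main obstacle will be the verification that after $k$ rounds the induced map $c_m \mapsto d_m$, extended by $a_i \mapsto b_i$, together with the selected sets, is indeed a partial isomorphism of the \emph{full} structures, not merely of the components. Set membership is immediate from the componentwise definition. The delicate point is preservation of the relations in $\tau$: given a tuple from $R^\A$ all of whose entries were chosen during the game, the gluing observation ensures it is localized to a single $\A_j$, whereupon the winning property of $\sigma_j$ transfers it to $R^{\B_j} \subseteq R^\B$; the converse direction uses the analogous localization of $R^\B$. Once this localization-and-transfer step is made rigorous, $\sigma$ is a winning strategy for the $k$-round game on $(\A,\bar a)$ vs $(\B,\bar b)$, establishing the desired $k$-equivalence and hence the lemma.
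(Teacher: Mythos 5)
Your proposal is correct and follows exactly the route the paper intends: the paper proves this lemma ``by Ehrenfeucht--Fra\"{\i}ss\'{e} games, analogously to Lemma~\ref{lem:folklore-bottomUp}'', i.e.\ by composing the duplicator's winning strategies on $\I(\AA,\S_{s},s)$ vs.\ $\I(\BB,\T_{t},t)$ and on $\I(\AA,\bar{\S}_{s},s)$ vs.\ $\I(\BB,\bar{\T}_{t},t)$, using that the two induced substructures overlap exactly in the bag and that, by the connectedness condition, every relation tuple lies entirely in one of them. Your strengthening to $(\AA,\bar{a})\mequiv(\BB,\bar{b})$ and the forced agreement of both strategies on the bag elements are exactly the ingredients of that argument, so the proposal matches the paper's proof.
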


\begin{proof} Again, the proof is by 
Ehrenfeucht-Fra\"{\i}ss\'{e} games,  
analogously to 
Lemma~\ref{lem:folklore-bottomUp}
$\punto$
\end{proof}

\noindent
{\em Discussion.}
Lemma~\ref{lem:folklore-bottomUp} provides the intuition
how to determine the $k$-type of the substructure induced by a 
subtree $\S_{s}$ via a bottom-up traversal of the tree decomposition $\S$. 
The three cases in the lemma refer to the three kinds of nodes which the 
root node $s$ of this subtree can have. The essence of the lemma is 
that the type of the structure induced by $\S_{s}$ is 
fully determined by the type of the structure induced by the
subtree rooted at the child node(s) plus the relations between elements in
the bag at node $s$. Of course, this is no big surprise.
Analogously, Lemma~\ref{lem:folklore-topDown} deals with the
$k$-type of the substructure induced by a 
subtree $\bar{\S}_{s}$, which can be obtained via a top-down traversal of 
$\S$. Finally, Lemma~\ref{lem:gluing} shows how
the $k$-type of the substructures induced by $\S_{s}$ and $\bar{\S}_{s}$
fully determines the type of the entire structure $\AA$ extended by some
domain element from the bag of $s$.

\section{Monadic Datalog}
\label{sec:monDatalog}

In this section, we introduce two restricted fragments of datalog,
namely  {\em monadic datalog\/} over finite 
structures with bounded treewidth and the {\em quasi-guarded fragment}
of datalog. 
Let $\tau = \{R_1, \dots, R_K\}$ be a set of predicate symbols and let $w \geq 1$ denote
the treewidth. Then we define the following extended signature
$\tauTD$. 

%\smallskip

\begin{center}
$
\tauTD = \tau \cup \{ \root, \leaf, 
\child_1, \child_2, \bag \}
$
\end{center}

%\smallskip
\noindent
where the unary predicates $\root$, and $\leaf$ as well
as the binary predicates $\child_1$ and $\child_2$ are used
to represent the tree  $T$ of the normalized 
tree decomposition in the obvious way. 
For instance, we write 
$\child_1(s_1,s)$ to denote that $s_1$ is either the first child or the only child
of $s$. 
Finally, $\bag$ has arity $w+2$, where $\bag(t, a_0, \dots, a_w)$
means that the bag at node $t$ is $(a_0, \dots, a_w)$.

\begin{definition}
\label{def:MonDatalog}
Let $\tau$ be a set of predicate symbols and let
$w \geq 1$.
A monadic datalog program over $\tau$-structures with 
treewidth $w$ is a set of datalog rules where all
extensional predicates are from $\tauTD$ and
all intensional predicates are unary.
\end{definition}
For any $\tau$-structure $\AA$ with normalized tree decomposition 
$\TT = \langle T, (A_t)_{t\in T} \rangle$ 
of width $w$, we denote by $\AATD$
the $\tauTD$-structure representing $\AA$ plus 
$\TT $ 
as follows:
The domain of $\AATD$ is the union of $\domA$ and the nodes
of $T$. 
In addition to the relations $\RAi$ with 
$R_i \in \tau$, the structure
$\AATD$ also contains relations for each predicate
$\root$, $\leaf$, 
$\child_1$, $\child_2$, and $\bag$ thus representing
the tree decomposition $\TT$. 
By \cite{Bod96}, one can compute $\AATD$ from $\AA$ 
in linear time w.r.t.\
the size of $\AA$. 
Hence, the size of $\AATD$ 
(for some reasonable encoding, see e.g.\ 
\cite{flum-frick-grohe-JACM02})
is also 
linearly bounded by the size of $\AA$.

\begin{example}
\label{bsp:tauTD-structure}

Recall the 
relational schema $(R,F)$ represented by the structure $\AA$ from Example \ref{bsp:tree-decomposition}
with normalized tree decomposition $\TT'$ in
Figure~\ref{fig:normalized-tree-decomp}. 
The domain of $\AATD$ is the union of $\domA$ and the tree nodes $\{s_1, \ldots, s_{22}\}$.
The corresponding $\tauTD$ structure $\AATD$ representing
the relational schema plus tree decomposition $\TT'$
is made up by the following set of ground atoms:
$\root(s_1)$, $\leaf(s_{12})$, $\leaf(s_{14})$, $\leaf(s_{19})$,
$\child_1(s_2, s_1)$, $\child_2(s_3, s_1)$, $\ldots$,
$\bag(s_1, f_3, d,e)$, $\ldots$.
\nop{$\AATD$ als set of ground atoms fuer dieses Beispiel angeben.}
\end{example}

As we recalled in Section~\ref{sec:Datalog}, the
evaluation of monadic datalog is NP-complete
(combined complexity). However, the target of our transformation
from MSO to datalog will be a further restricted fragment of datalog,
which we refer to as ``quasi-guarded''. The evaluation of
this fragment can be easily shown to be tractable.

\begin{definition}
\label{def:QuasiGuarded}
Let $B$ be an atom and $y$ a variable in some rule $r$.
 We call
$y$  ``functionally dependent'' on $B$ if in every ground 
instantiation $r'$ of $r$, the value of $y$ is uniquely determined 
by the value of $B$.

We call a datalog program $\P$ ``quasi-guarded'' if
every rule $r$ contains an extensional atom $B$,
s.t.\ every variable occurring in $r$ either occurs in $B$ or
is functionally dependent on  $B$.
\end{definition}

\begin{theorem}
\label{theo:mondatalog-complexity}
Let $\P$ be a quasi-guarded datalog program and
let $\AA$ be a finite structure. Then $\P$ can be 
evaluated over $\AA$ in time 
${\cal O}(|\PP| * |\AA|)$, where
$|\PP|$ denotes the size of the datalog
program and $|\AA|$ denotes the size 
of the data.
\end{theorem}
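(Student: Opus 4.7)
The plan is to perform a bottom-up (semi-naive) fixpoint evaluation of $\P$ on $\AA$ and show that the total work is $\mathcal{O}(|\P| \cdot |\AA|)$. The structural property I will exploit is that, for each rule $r \in \P$ with quasi-guard $B$, every ground instantiation $r\theta$ of $r$ is uniquely determined by the single ground atom $B\theta$: variables occurring in $B$ are instantiated directly, while variables that are merely functionally dependent on $B$ get their values fixed by the very definition of functional dependence. Since $B$ uses a single extensional predicate (either from $\tau$ or from $\{\root, \leaf, \child_1, \child_2, \bag\}$), the number of ground atoms matching $B$ is at most $|\AA|$, so the number of distinct ground instantiations of $r$ is $\mathcal{O}(|\AA|)$.

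First I would, in a preprocessing phase of total cost $\mathcal{O}(|\P| \cdot |\AA|)$, construct for each rule $r$ and each variable $y$ functionally dependent on the guard $B$ a lookup table mapping a ground instance $B\theta$ to the value $\theta(y)$. Because the dependencies in question arise directly from the extensional structure---for instance, ``the $i$-th component of the bag at a given tree node'' or ``the unique child of a given non-branch node''---each such table can be built by a single pass over the EDB and then queried in constant time. With these tables in place, the semi-naive loop proceeds as follows: for each rule $r$ and each ground instance $B\theta$ of its guard, recover the full binding $\theta$ in constant time, check the remaining body atoms of $r\theta$ in constant time against indexes on the intensional predicates, and (if they all succeed) insert the head atom of $r\theta$ into the appropriate index.

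The step I expect to be the main obstacle is arguing that each rule $r$ is effectively examined only $\mathcal{O}(|\AA|)$ times across the \emph{whole} fixpoint computation, rather than once per semi-naive iteration. The idea is that, because every ground instantiation $r\theta$ is in bijection with a ground instance of the guard $B$, one can schedule the semi-naive evaluation so that each such $B\theta$ triggers $r$ a number of times bounded by the number of intensional body atoms of $r$, and hence by a constant depending only on $\P$. Summing over all rules then yields $\mathcal{O}(|\P| \cdot |\AA|)$ firings of constant cost each, which gives the claimed bound. Constant-time insertion and lookup in the intensional indexes is standard, provided one exploits the fact that the arities of intensional predicates are bounded by a constant depending only on $\P$.
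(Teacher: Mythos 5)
Your proposal is correct and follows essentially the same route as the paper: the heart of both arguments is that the quasi-guard $B$ of each rule has at most $|\AA|$ ground instances and, by functional dependence, each such instance determines the entire ground instantiation of the rule, so the grounding of $\P$ over $\AA$ has size ${\cal O}(|\P| * |\AA|)$ and can be produced within that bound. The only difference is in the final step: the paper simply materializes this ground program and invokes the known linear-time (combined complexity) evaluation of propositional datalog (Dowling--Gallier/Minoux), whereas you re-derive that fact inline via a semi-naive evaluation with per-guard lookup tables and a trigger-counting schedule -- which is exactly the counter argument behind the cited propositional result, so your extra work could be replaced by that citation.
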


\begin{proof}
Let $r$ be a rule in the program $\P$ and let 
$B$ be the ``quasi-guard'' of $r$, i.e., all variables
in $r$ either occur in $B$ or are functionally dependent on $B$.
In order to compute all possible ground instances
$r'$ of $r$ over $\AA$, we first instantiate $B$. The maximal number of 
such instantiations is clearly bounded by $|\AA|$.
Since all other variables occurring in $r$ are functionally dependent on the 
variables in $B$, in fact the number of 
all possible ground instantiations $r'$ of $r$ is bounded by $|\AA|$. 

Hence, in total, the ground program $\P'$ consisting of all possible
ground instantiations of the rules in $\P$ has size 
${\cal O}(|\PP| * |\AA|)$ and also the computation of these ground
rules fits into the linear time bound. As we 
recalled in Section~\ref{sec:Datalog}, the ground program $\P'$ 
can be evaluated over $\AA$ in time 
${\cal O}(|\PP'| + |\AA|) = {\cal O}((|\PP| * |\AA|) + |\AA|)
= {\cal O}(|\PP| * |\AA|)$.
$\punto$
\end{proof}

Before we state the main result concerning the 
{\em expressive power\/} of monadic datalog over structures with
bounded treewidth,
we introduce the following notation. In order to simplify the exposition
below, we assume that all predicates $R_i \in \tau$ have the same arity $r$.
First, this can be easily achieved by copying columns in relations
with smaller arity. Moreover, it is easily seen that the results also hold
without this restriction. 

It is convenient to use the following abbreviations. Let 
$\bar{a} = (a_0, \dots, a_w)$ be a tuple of domain elements. Then 
we write $\R(\bar{a})$ to denote the set of all ground atoms
with predicates in $\tau = \{R_1, \dots, R_K\}$ 
and arguments in $\{a_0, \dots, a_w\}$, i.e.,
$$\R(\bar{a}) = 
\bigcup_{i = 1}^{K} \  \
\bigcup_{j_1 = 0}^w \dots
\bigcup_{j_r = 0}^w 
\{R_i(a_{j_1}, \ldots, a_{j_r})\}
$$
%
%  We write $\R_0(\bar{a})$ for the set of those ground atoms
%  in $\R(\bar{a})$ where the element $a_0$ occurs as 
%  an argument, i.e.,
%
%  $$
%  \R_0(\bar{a}) = \bigcup^{i \in \{1..k\}}_{i_j 
%  \in \{0..w\}_{(1 \leq j \leq r)} \wedge 0 
%  \in \{i_1, \ldots, i_r\}} 
%  \{R_i(a_{i_1}, \ldots, a_{i_r})\}
%  $$
%
Let $\AA$ be a structure  with tree decomposition $\TT$ and
let $s$ be a node in $\TT$ whose bag is $\bar{a} = (a_0, \dots, a_w)$.
Then we write $(\AA,s)$ as a short-hand for the 
structure $(\AA,\bar{a})$ with distinguished
constants $\bar{a} = (a_0, \dots, a_w)$.

\begin{theorem}
\label{theo:MSO-to-Datalog}
Let $\tau$ and $w \geq 1$ be arbitrary but fixed.
Every MSO-definable unary query over $\tau$-structures of 
tree\-width $w$ is 
also 
definable in the quasi-guarded fragment of 
monadic datalog over $\tauTD$.
\end{theorem}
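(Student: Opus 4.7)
The plan is to compile $\phi$ into a three-layer quasi-guarded monadic datalog program over $\tauTD$ whose answer predicate collects exactly those $a \in \dom(\AA)$ with $(\AA,a) \models \phi$. The logical skeleton is a Feferman--Vaught-style type computation driven by the three composition lemmas of Section~\ref{sec:generalization-NS}, with the final step built from Lemma~\ref{lem:gluing}. Let $k$ be the quantifier depth of $\phi$.

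\emph{Bottom-up types.} As recalled in Section~\ref{sec:MSO}, only finitely many $k$-types arise on structures $(\AA',\bar a)$ whose distinguished tuple has $w{+}1$ pairwise distinct elements; let $\ThetaUp$ denote this finite set, and for every $\sigma \in \ThetaUp$ introduce a unary intensional predicate $\up_\sigma$ whose intended meaning at a tree node $s$ is ``$\I(\AA,\S_s,s)$ has type $\sigma$.'' I would then emit one rule per transition in the finite lookup table produced by Lemma~\ref{lem:folklore-bottomUp}: at leaves, a rule fixes $\sigma$ from the atomic diagram of the bag; at a permutation node, a rule of the shape
\[
\up_{\pi(\sigma')}(s) \la \bag(s,x_0,\ldots,x_w),\, \child_1(s',s),\, \bag(s',x_{\pi(0)},\ldots,x_{\pi(w)}),\, \up_{\sigma'}(s')
\]
captures case~(1); at an element-replacement node, a rule enumerates the finitely many atomic types $\alpha$ of the new bag and derives the unique $\sigma$ forced by $(\sigma',\alpha)$ through case~(2); and at a branch node, one rule per pair $(\sigma_1,\sigma_2)$ of child types derives the unique $\sigma$ given by case~(3). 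Finiteness and effectiveness of the lookup table follow from decidability of $\mequiv$.

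\emph{Top-down types and gluing.} Symmetrically I would introduce $\down_\theta$ for each $k$-type $\theta \in \ThetaD$ of envelope-induced structures $\I(\AA,\bar{\S}_s,s)$, seed $\down$ at the root with the single-bag envelope type determined by the root's atomic diagram, and emit top-down propagation rules mirroring the three cases of Lemma~\ref{lem:folklore-topDown}. By Lemma~\ref{lem:gluing}, any triple $(\sigma,\theta,i)$ with $\sigma\in\ThetaUp$, $\theta\in\ThetaD$ and $i\in\{0,\ldots,w\}$ determines $(\AA,a_i)$ up to $\mequiv$ and therefore the truth value of $\phi(a_i)$. Off-line I would compute the set $W$ of winning triples and emit, for each $(\sigma,\theta,i)\in W$, the rule
\[
\mathrm{Ans}(x_i) \la \bag(s,x_0,\ldots,x_w),\, \up_\sigma(s),\, \down_\theta(s).
\]
Correctness of $\mathrm{Ans}$ is then a routine induction on the tree decomposition, matching the predicates $\up_\sigma$ and $\down_\theta$ to the types guaranteed by Lemmas~\ref{lem:folklore-bottomUp}--\ref{lem:gluing}. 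Quasi-guardedness is immediate: $\bag(s,\bar x)$ functionally pins $\bar x$ down from $s$, and under Definition~\ref{def:normalized} each $\child_i$-atom is functional in both directions, so a single $\bag$- or $\child_i$-atom serves as the required extensional guard in every rule above.

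The real obstacle is not the logical scaffold but the encoding of the \emph{atomic type} of a bag inside a Horn rule: an atomic type prescribes both positive and negative atoms $R_i(a_{j_1},\ldots,a_{j_r})$ on the $w{+}1$ bag elements, whereas monadic datalog is negation-free. I would sidestep this by precomputing, in linear time from $\AA$ and the normalized $\TT$, the atomic-type indicator of every bag as an additional extensional unary fact on the corresponding tree node and extending $\tauTD$ accordingly; this preserves the linear bound on $|\AATD|$ noted after Definition~\ref{def:MonDatalog} and turns the leaf and element-replacement rules into purely positive quasi-guarded ones. The remaining bookkeeping -- checking case by case that each rule's variables are all functionally determined by the chosen extensional guard under the four node types of Definition~\ref{def:normalized} -- is then mechanical.
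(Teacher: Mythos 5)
Your construction is essentially the paper's own proof: a bottom-up family of $k$-type predicates justified by Lemma~\ref{lem:folklore-bottomUp}, a top-down family justified by Lemma~\ref{lem:folklore-topDown}, answer rules obtained by gluing the two at a node via Lemma~\ref{lem:gluing}, and quasi-guardedness argued from the functionality of $\bag$ and $\child_i$ in normalized tree decompositions. Two points of divergence are worth noting. First, the paper does not need your signature extension: its leaf and element-replacement rules simply carry both the positive atoms and the negated extensional atoms $\neg R_i(x_{j_1},\dots,x_{j_r})$ describing the bag's atomic diagram, i.e., negation on \emph{extensional} predicates is tacitly admitted (this is harmless for Theorem~\ref{theo:mondatalog-complexity}, since the grounding argument is unaffected). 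Your workaround --- precomputing a unary atomic-type indicator for every tree node --- is sound and keeps $|\AATD|$ linear, but the resulting program lives over a proper enrichment of $\tauTD$, so as written it establishes a benign variant of the statement rather than the statement verbatim; adopting negated extensional atoms (as the paper does) removes the detour. Second, your appeal to a ``lookup table produced by Lemma~\ref{lem:folklore-bottomUp}'' and to decidability of $\mequiv$ glosses over the one piece of bookkeeping the paper makes explicit: the lemmas only give \emph{well-definedness} of the type transitions, not the table itself. To compute the transitions one maintains, for each type in $\ThetaUp$ and $\ThetaD$, a concrete witness structure together with its tree decomposition, extends these witnesses node by node over the finitely many possible bag diagrams, and decides $\mequiv$ against the stored witnesses; the same remark applies to your off-line computation of the winning triples, which requires gluing witnesses and model checking $\phi$ on the glued structure. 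With these two adjustments your argument coincides with the paper's.
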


\begin{proof}
Let $\phi(x)$ be an arbitrary MSO formula with free variable $x$ and 
quantifier depth $k$. We have to construct a monadic datalog program $\P$
with distinguished predicate $\phi$ which defines the same query.

W.l.o.g., we only consider the case of structures whose domain has
$\geq w +1$ elements.
We maintain two disjoint sets of $k$-types $\ThetaUp$ and $\ThetaD$, representing
$k$-types of structures $(\AA,\bar{a})$ of the following form:
$\AA$ has a tree decomposition $\TT$ of width $w$ and $\bar{a}$
is the bag of some node $s$ in $\TT$. Moreover, for $\ThetaUp$,
we require that $s$ is the root of $\SS$ while, for
$\ThetaD$, we require that $s$ is a leaf node of $\TT$. 
We maintain for each
type $\theta$ a witness $W(\theta) = \ATs$.
The types in $\ThetaUp$ and $\ThetaD$ will serve as predicate names
in the monadic datalog program to be constructed.
Initially, $\ThetaUp = \ThetaD = \P = \emptyset$.

%%%%%%%%%%%%%%%%%%%%%%%%%%%%%%%%%%%
% Bottom up
%%%%%%%%%%%%%%%%%%%%%%%%%%%%%%%%%%%

\medskip

\noindent
1. {\em ``Bottom-up'' construction of $\ThetaUp$.}

\noindent
{\sc Base Case.} Let $a_0, \ldots, a_w$ be pairwise
distinct elements and let $\SS$ be a tree decomposition
consisting of a single node $s$, whose bag is
$A_s = (a_0, \ldots, a_w)$. Then we consider
all possible structures $(\AA, s)$ with this 
tree decomposition. In particular, 
$\dom(\AA) = \{a_0, \dots, a_w\}$. We
get all possible structures with tree decomposition $\SS$
by letting the EDB $\EE(\AA)$ be any 
subset of $\R(\bar{a})$.
For every such structure $(\AA, s)$, we 
check if there exists a type $\theta \in \ThetaUp$
with $ W(\theta) = \BTt$, s.t.\ $(\AA,s) \mequiv (\BB,t)$.
If such a $\theta$ exists, we take it. 
Otherwise we invent a new
token $\theta$, add it to $\ThetaUp$ and set 
$W(\theta):= \ASs$. In any case, we add the following
rule to the program $\P$:

\medskip

\noindent
\[
\begin{array}{lll}
\theta (v) & \la & \bag(v,x_0, \dots, x_w), \leaf(v), 
\{R_i (x_{j_1}, \dots, x_{j_r}) \mid R(a_{j_1}, \dots, a_{j_r}) \in \EE(\AA) \}, \\
&&\{\neg R_i (x_{j_1}, \dots, x_{j_r}) \mid R(a_{j_1}, \dots, a_{j_r}) \not\in \EE(\AA) \}. 
\end{array}
\]

\medskip

\noindent
{\sc Induction step.}
We construct new structures by extending the tree decompositions
of existing witnesses in ``bottom-up'' direction, i.e., by introducing
a new root node. This root node may be one of three kinds of nodes. 
% in a normalized tree decomposition.

\smallskip

\noindent
(a) Permutation nodes.
For each $\theta' \in \ThetaUp$, let
$W(\theta') = \ASsp$ with bag
$A_{s'} = (a_0, \ldots, a_w)$ at the root $s'$ in $\SS'$. 
Then we consider all possible triples $\ASs$, where
$\SS$ is obtained from $\SS'$ by appending $s'$ to a 
new root node $s$, s.t.\ $s$ is a permutation node, 
i.e., there exists some permutation $\pi$, s.t.\
$A_{s} = (a_{\pi(0)}, \ldots, a_{\pi(w)})$

For every such structure $(\AA, s)$, we 
check if there exists a type $\theta \in \ThetaUp$
with $ W(\theta) = \BTt$, s.t.\ $(\AA,s) \mequiv (\BB,t)$.
If such a $\theta$ exists, we take it. 
Otherwise we invent a new
token $\theta$, add it to $\ThetaUp$ and set 
$W(\theta):= \ASs$. In any case, we add the following
rule to the program $\P$:

\smallskip

\noindent

\[
\begin{array}{lll}
\theta (v)  & \la & \bag(v,x_{\pi(0)}, \ldots, x_{\pi(w)}), 
\child_1(v',v), 
\theta' (v'), \bag(v',x_0, \dots, x_w).
\end{array}
\]

\medskip

\noindent
(b) Element replacement nodes.
For each $\theta' \in \ThetaUp$, let
$W(\theta') = \ASspp$ with bag
$A_{s'} = (a'_0, a_1, \ldots, a_w)$ at the root $s'$ in $\SS'$. 
% Let $A'$ denote the domain of $\AA'$.
Then we consider all possible triples $\ASs$, where
$\SS$ is obtained from $\SS'$ by appending $s'$ to a 
new root node $s$, s.t.\ $s$ is an element 
replacement node. For the tree decomposition $\SS$, we thus
 invent some new element $a_0$ 
and set $A_{s} = (a_0, a_1, \ldots, a_w)$. 
For this
tree decomposition $\SS$, we 
consider all possible structures $\AA$ with 
$\dom (\AA) = \dom (A') \cup \{a_0\}$
where the EDB $\EE(\AA')$ is extended to the EDB
$\EE(\AA)$ by new ground atoms from $\R(\bar{a})$, s.t.\
$a_0$ occurs as argument of 
all ground atoms in
$\EE(\AA) \setminus \EE(\AA')$.

For every such structure $(\AA, s)$, we 
check if there exists a type $\theta \in \ThetaUp$
with $ W(\theta) = \BTt$, s.t.\ $(\AA,s) \mequiv (\BB,t)$.
If such a $\theta$ exists, we take it. 
Otherwise we invent a new
token $\theta$, add it to $\ThetaUp$ and set 
$W(\theta):= \ASs$. In any case, we add the following
rule to the program $\P$:

\smallskip

\noindent

\[
\begin{array}{lll}
\theta (v) & \la & \bag(v,x_0, x_1, \dots, x_w),
\child_1(v',v),  \theta' (v'), \bag(v', x_0', x_1, \dots, x_w), \\
& &\{R_i (x_{j_1}, \dots, x_{j_r}) \mid R(a_{j_1}, \dots, a_{j_r}) \in \EE(\AA) \}, \\
& & \{\neg R_i (x_{j_1}, \dots, x_{j_r}) \mid R(a_{j_1}, \dots, a_{j_r}) \not\in \EE(\AA) \}. 
\end{array}
\]

\medskip

\noindent
(c) Branch nodes.
Let $\theta_1$, $\theta_2$ be two (not necessarily distinct) 
types in $\ThetaUp$ 
with $W(\theta_1) = 
\langle {\cal A}_1, {\cal S}_1, s_1 \rangle$
and 
$W(\theta_2) = 
\langle {\cal A}_2, {\cal S}_2, s_2 \rangle$. Let
$A_{s_1} = (a_0, \ldots, a_w)$ and
$A_{s_2} = (b_0, \ldots, b_w)$, respectively.
Moreover, let $\dom(\AA_1) \cap \dom(\AA_2) = \emptyset.$

Let $\delta$ be a renaming function with
$\delta = \{a_0 \leftarrow b_0, \ldots, a_w \leftarrow b_w\}$.
By applying $\delta$ to $\langle {\cal A}_2, {\cal S}_2, s_2 \rangle$,
we obtain a new triple
$\langle {\cal A}'_2, {\cal S}'_2, s_2 \rangle$ with
${\cal A}'_2 = {\cal A}_2 \delta$ and
${\cal S}'_2 = {\cal S}_2 \delta$. In particular, we
thus have $A_{s_2}\delta = (a_0, \ldots, a_w)$.
Clearly, 
$( {\cal A}_2, s_2) 
\mequiv ( {\cal A}'_2, s_2)$ holds.

For every such pair $\langle {\cal A}_1, {\cal S}_1, s_1 \rangle$ 
and $\langle {\cal A}'_2, {\cal S}'_2, s_2 \rangle$,
we check if the EDBs are inconsistent, i.e.,
$\EE(\AA_1) \cap \R(\bar{a}) \neq \EE(\AA'_2) \cap \R(\bar{a})$.
If this is the case, then we ignore this pair. Otherwise, 
we construct a new tree decomposition $\SS$ with a 
new root node $s$, whose child nodes are $s_1$ and $s_2$.
As the bag of $s$, we set $A_s = A_{s_1} = A_{s'_2}$. By construction,
$\SS$ is a normalized tree decomposition of the structure $\AA$
with $\dom(\AA) = \dom(\AA_1) \cup \dom(\AA'_2)$ and EDB 
$\EE(\AA) = \EE(\AA_1) \cup \EE(\AA'_2)$.

As in the cases above, we have to check
if there exists a type $\theta \in \ThetaUp$
with $ W(\theta) = \BTt$, s.t.\ $(\AA,s) \mequiv (\BB,t)$.
If such a $\theta$ exists, we take it. 
Otherwise we invent a new
token $\theta$, add it to $\ThetaUp$ and set 
$W(\theta):= \ASs$. In any case, we add the following
rule to the program $\P$:

\smallskip

\noindent

\[
\begin{array}{lll}
\theta (v)  & \la &  \bag(v,x_0, x_1, \dots, x_w),
\child_1(v_1,v),\theta_1 (v_1), \child_2(v_2,v),\theta_2 (v_2), \\
& & \bag(v_1,x_0, x_1, \dots, x_w), \bag(v_2,x_0, x_1, \dots, x_w).
\end{array}
\]

%%%%%%%%%%%%%%%%%%%%%%%%%%%%%%%%%%%
% Top down
%%%%%%%%%%%%%%%%%%%%%%%%%%%%%%%%%%%
\medskip

\noindent
2. {\em ``Top-down'' construction of $\ThetaD$.} 

\nop{****************
\smallskip
\noindent
Analogously to the ``bottom-up'' construction of $\ThetaUp$, 
we construct the set $\ThetaD$ of types with a ``top-down'' 
intuition. The base case is essentially the same as before since, 
in every tree decomposition with only one node $s$, this single
node is both the root and a leaf. For the induction step, 
we have to select the witness $W(\theta') = \ASspp$ of 
some already computed  type $\theta' \in \ThetaD$. Now  
the node $s'$ in $\SS'$ is a leaf node and we extend 
% the tree decomposition 
$\SS'$ to a new tree decomposition $\SS$
by appending a new leaf node $s$ as a child of 
$s'$. For all such tree decompositions
$\SS$, we consider all possible structures
$\AA$ by appropriately extending $\AA'$. The rules
added to the program $\P$ again reflect the 
type transitions from the type of the original structure
$(\AA',s')$ to the type of any such new structure $(\AA,s)$.
****************}

\noindent
{\sc Base Case.} Let $a_0, \ldots, a_w$ be pairwise
distinct elements and let $\SS$ be a tree decomposition
consisting of a single node $s$, whose bag is
$A_s = (a_0, \ldots, a_w)$. Then we consider
all possible structures $(\AA, s)$ with this 
tree decomposition. In particular, 
$\dom(\AA) =  \{a_0, \dots, a_w\}$. 
We get all possible structures with tree decomposition $\SS$
by letting the EDB $\EE(\AA)$ be any 
subset of $\R(\bar{a})$.
For every such structure $(\AA, s)$, we 
check if there exists a type $\theta \in \ThetaD$
with $ W(\theta) = \BTt$, s.t.\ $(\AA,s) \mequiv (\BB,t)$.
If such a $\theta$ exists, we take it. 
Otherwise we invent a new
token $\theta$, add it to $\ThetaD$ and set 
$W(\theta):= \ASs$. In any case, we add the following
rule to the program $\P$:

\smallskip

\noindent

\[
\begin{array}{lll}
\theta (v) & \la & \bag(v,x_0, \dots, x_w), \root(v),
\{R_i (x_{j_1}, \dots, x_{j_r}) \mid R(a_{j_1}, \dots, a_{j_r}) \in \EE(\AA) \}, \\
& & \{\neg R_i (x_{j_1}, \dots, x_{j_r}) \mid R(a_{j_1}, \dots, a_{j_r}) \not\in \EE(\AA) \}. 
\end{array}
\]

\medskip

\noindent
{\sc Induction step.}
We construct new structures by extending the tree decompositions
of existing witnesses in ``top-down'' direction, i.e., by introducing
a new leaf node $s$ and appending it as new child to a former 
leaf node $s'$. The node $s'$ may thus become one of three kinds of 
nodes in a normalized tree decomposition.

\smallskip

\noindent
(a) Permutation nodes.
For each $\theta' \in \ThetaD$, let
$W(\theta') = \ASsp$ with bag
$A_{s'} = (a_0, \ldots, a_w)$ at some leaf node $s'$ in $\SS'$. 
Then we consider all possible triples $\ASs$, where
$\SS$ is obtained from $\SS'$ by appending $s$ as a
new child of $s'$, s.t.\ $s'$ is a permutation node, 
i.e., there exists some permutation $\pi$, s.t.\
$A_{s} = (a_{\pi(0)}, \ldots, a_{\pi(w)})$

For every such structure $(\AA, s)$, we 
check if there exists a type $\theta \in \ThetaD$
with $W(\theta) = \BTt$, s.t.\ $(\AA,s) \mequiv (\BB,t)$.
If such a $\theta$ exists, we take it. 
Otherwise we invent a new
token $\theta$, add it to $\ThetaD$ and set 
$W(\theta):= \ASs$. In any case, we add the following
rule to the program $\P$:

\medskip

\noindent
\[
\begin{array}{lll}
\theta (v)  & \la & \bag(v,x_{\pi(0)}, \ldots, x_{\pi(w)}), \child_1(v, v'),
\theta' (v'), \bag(v',x_0, \dots, x_w).
\end{array}
\]

\medskip

\noindent
(b) Element replacement nodes.
For each $\theta' \in \ThetaD$, let
$W(\theta') = \ASspp$ with bag
$A_{s'} = (a'_0, a_1, \ldots$, $a_w)$ at leaf node
$s'$ in $\SS'$. 
Then we consider all possible triples $\ASs$, where
$\SS$ is obtained from $\SS'$ by appending $s$ 
as new child of $s'$, s.t.\ $s'$ is an element 
replacement node. For the tree decomposition $\SS$, we thus
invent some new element $a_0$ 
and set $A_{s} = (a_0, a_1, \ldots, a_w)$. For this
tree decomposition $\SS$, we 
consider all possible structures $\AA$ with 
$\dom(\AA) = \dom(\AA') \cup \{a_0\}$
where the EDB $\EE(\AA')$ is extended to the EDB
$\EE(\AA)$ by new ground atoms 
from $\R(\bar{a})$, s.t.\
$a_0$ occurs as argument of 
all ground atoms in
$\EE(\AA) \setminus \EE(\AA')$.

For every such structure $(\AA, s)$, we 
check if there exists a type $\theta \in \ThetaD$
with $ W(\theta) = \BTt$, s.t.\ $(\AA,s) \mequiv (\BB,t)$.
If such a $\theta$ exists, we take it. 
Otherwise we invent a new
token $\theta$, add it to $\ThetaD$ and set 
$W(\theta):= \ASs$. In any case, we add the following
rule to the program $\P$:

\smallskip

\noindent

\[
\begin{array}{lll}
\theta (v) & \la & \bag(v,x_0, x_1, \dots, x_w),
\child_1(v,v'),  \theta' (v'), \bag(v', x_0', x_1, \dots, x_w), \\
& & \{R_i (x_{j_1}, \dots, x_{j_r}) \mid R(a_{j_1}, \dots, a_{j_r}) \in \EE(\AA) \}, \\
& & \{\neg R_i (x_{j_1}, \dots, x_{j_r}) \mid R(a_{j_1}, \dots, a_{j_r}) \not\in \EE(\AA) \}. 
\end{array}
\]

\medskip

\noindent
(c) Branch nodes.
Let $\theta \in \ThetaD$ and $\theta_2 \in \ThetaUp$ 
with $W(\theta) = 
\langle {\cal A}, {\cal S}, s \rangle$
and 
$W(\theta_2) = 
\langle {\cal A}_2, {\cal S}_2, s_2 \rangle$. Note that
$s$ is a leaf in $\S$ while $s_2$ is the root of $\S_2$.
Now let
$A_{s} = (a_0, \ldots, a_w)$ and
$A_{s_2} = (b_0, \ldots, b_w)$, respectively,
and let $\dom(\AA) \cap \dom(\AA_2) = \emptyset$.

Let $\delta$ be a renaming function with
$\delta = \{a_0 \leftarrow b_0, \ldots, a_w \leftarrow b_w\}$.
By applying $\delta$ to $\langle {\cal A}_2, {\cal S}_2, s_2 \rangle$,
we obtain a new triple
$\langle {\cal A}'_2, {\cal S}'_2, s_2 \rangle$ with
${\cal A}'_2 = {\cal A}_2 \delta$ and
${\cal S}'_2 = {\cal S}_2 \delta$. In particular, we
thus have $A_{s_2}\delta = (a_0, \ldots, a_w)$.
Clearly, 
$( {\cal A}_2, s_2) 
\mequiv ( {\cal A}'_2, s_2)$ holds. 

For every such pair $\langle {\cal A}, {\cal S}, s \rangle$ 
and $\langle {\cal A}'_2, {\cal S}'_2, s_2 \rangle$,
we check if the EDBs are inconsistent, i.e.,
$\EE(\AA) \cap \R(\bar{a}) \neq \EE(\AA'_2) \cap \R(\bar{a})$.
If this is the case, then we ignore this pair. Otherwise, 
we construct a new tree decomposition $\SS_1$ by introducing a 
new leaf node $s_1$ and appending both $s_1$ and $s_2$ as child nodes
of $s$.
As the bag of $s_1$, we set $A_{s_1} = A_{s} = A_{s'_2}$. 
By construction,
$\SS_1$ is a normalized tree decomposition of the structure $\AA_1$
with $\dom(\AA_1) = \dom(\AA) \cup \dom(\AA'_2)$ and EDB 
$\EE(\AA_1) = \EE(\AA) \cup \EE(\AA'_2)$.

As in the cases above, we have to check
if there exists a type $\theta_1 \in \ThetaD$
with $ W(\theta_1) = \BTt$, s.t.\ $(\AA_1,s_1) \mequiv (\BB,t)$.
If such a $\theta_1$ exists, we take it. 
Otherwise we invent a new
token $\theta_1$, add it to $\ThetaD$ and set 
$W(\theta_1):= \langle {\cal A}_1, {\cal S}_1, s_1 \rangle$
In any case, we add the following
rule to the program $\P$:

\smallskip

\noindent

\[
\begin{array}{lll}
\theta_1 (v_1) & \la &  \bag(v_1,x_0, x_1, \dots, x_w),
\child_1(v_1,v), \child_2(v_2,v),\theta (v), \theta_2 (v_2), \\
& & \bag(v,x_0, x_1, \dots, x_w), \bag(v_2,x_0, x_1, \dots, x_w).
\end{array}
\]

\smallskip

\noindent
Now suppose that $\SS_1$ is constructed from $\SS$ and $\SS_2$ by 
attaching the new node $s_1$ as second child of $s$ and $s_2$ as the
first child. In this case, the structure $\AA_1$ remains exactly the same 
as in the case above, since the order of the child nodes of a 
node in the tree decomposition is irrelevant. Thus, whenever the
above rule is added to the program $\P$, then also the following rule
is added:

\smallskip

\noindent

\[
\begin{array}{lll}
\theta_1 (v_2) & \la & \bag(v_2,x_0, x_1, \dots, x_w),
\child_1(v_1,v), \child_2(v_2,v), \theta (v), \theta_2 (v_1), \\
& & \bag(v,x_0, x_1, \dots, x_w), \bag(v_1,x_0, x_1, \dots, x_w).
\end{array}
\]

\nop{****************
\smallskip
\noindent
The correctness of this construction is an immediate consequence 
of Lemma~\ref{lem:folklore-topDown}. 
****************} %nop

\medskip

\noindent
3. {\em Element selection.} 

\smallskip

\noindent
We consider all pairs of types $\theta_1 \in \ThetaUp$ and 
$\theta_2 \in \ThetaD$. 
Let $W(\theta_1) = 
\langle {\cal A}_1, {\cal S}_1, s_1 \rangle$
and 
$W(\theta_2) = 
\langle {\cal A}_2, {\cal S}_2, s_2 \rangle$. Moreover, 
let
$A_{s_1} = (a_0, \ldots, a_w)$ and
$A_{s_2} = (b_0, \ldots, b_w)$, respectively, 
and let $\dom(\AA_1) \cap \dom(\AA_2) = \emptyset$.

Let $\delta$ be a renaming function with
$\delta = \{a_0 \leftarrow b_0, \ldots, a_w \leftarrow b_w\}$.
By applying $\delta$ to $\langle {\cal A}_2, {\cal S}_2, s_2 \rangle$,
we obtain a new triple
$\langle {\cal A}'_2, {\cal S}'_2, s_2 \rangle$ with
${\cal A}'_2 = {\cal A}_2 \delta$ and
${\cal S}'_2 = {\cal S}_2 \delta$. In particular, we
thus have $A_{s_2}\delta = (a_0, \ldots, a_w)$.
Clearly, 
$( {\cal A}_2, s_2) 
\mequiv ( {\cal A}'_2, s_2)$ holds. 

For every such pair $\langle {\cal A}_1, {\cal S}_1, s_1 \rangle$ 
and $\langle {\cal A}'_2, {\cal S}'_2, s_2 \rangle$,
we check if the EDBs are inconsistent, i.e.,
$\EE(\AA_1) \cap \R(\bar{a}) \neq \EE(\AA'_2) \cap \R(\bar{a})$.
If this is the case, then we ignore this pair. Otherwise, 
we construct a new tree decomposition $\SS$ 
by identifying $s_1$ (= the root of $\SS_1$) with 
$s_2$ (= a leaf of $\SS_2$).
% The bag of $s$ is, of course,  $A_s = A_{s_1} = A_{s'_2}$. 
By construction,
$\SS$ is a normalized tree decomposition of the structure $\AA$
with $\dom(A) = \dom(\AA_1) \cup \dom(\AA'_2)$ and 
$\EE(\AA) = \EE(\AA_1) \cup \EE(\AA'_2)$.

Now check for each $a_i$ in $A_{s_1} = A_{s_2}\delta$, if 
$\AA \models \phi (a_i)$. If this is the case, then
we add the following rule to ${\cal P}$.
\smallskip

\noindent
\[
\begin{array}{lll}
\phi(x_i) & \la & \theta_1 (v), \theta_2 (v), bag(v,x_0, \ldots, x_w).
\end{array}
\]

\medskip

\noindent
We claim that the program $\P$ with distinguished monadic predicate 
$\phi$ is the desired monadic datalog program, i.e., let
$\AA$ be an arbitrary input $\tau$-structure with tree decomposition
$\SS$ and let $\AATD$ denote the corresponding  $\tauTD$-structure. Moreover,
let $a \in \dom(\AA)$. Then the following equivalence
holds:
$\AA \models \phi (a) \mbox{ iff } \phi(a) 
 \mbox{ is in the least fixpoint of } \P \cup \AATD.$

Note that the intensional predicates in $\ThetaUp$, $\ThetaD$,
and $\{\phi\}$ are layered in 
% the sense 
that we can first compute
the  least fixpoint of the predicates in $\ThetaUp$, then 
% of the predicates in 
$\ThetaD$, and finally $\phi$. 

The bottom-up construction of $\ThetaUp$ guarantees that we indeed
construct all possible types of structures $(\BB,t)$ with
tree decomposition $\TT$ and root $t$. This can be easily shown by 
Lemma~\ref{lem:folklore-bottomUp} and an induction on the size 
of the tree decomposition $\TT$. On the other hand, for 
every subtree $\SS_s$ of $\SS$, the
type of the induced substructure $\I(\AA,\SS_s,s)$ is $\theta$
for some $\theta \in \ThetaUp$
if and only if the atom $\theta(s)$ is in the 
 least fixpoint of 
$\P \cup \AATD$. Again this can be shown by an easy induction 
argument using Lemma~\ref{lem:folklore-bottomUp}.

Analogously, we may conclude via Lemma~\ref{lem:folklore-topDown} 
that $\ThetaD$ contains all possible types of structures $(\BB,t)$ with
tree decomposition $\TT$ and some leaf node $t$. Moreover, for 
every subtree $\bar{\SS}_s$ of $\SS$, the
type of the induced substructure $\I(\AA,\bar{\SS}_s,s)$ is $\theta$
for some $\theta \in \ThetaD$
if and only if the atom $\theta(s)$ is in the 
least fixpoint of 
$\P \cup \AATD$. 
The definition of the predicate $\phi$ in part 3 
is a direct realization of Lemma~\ref{lem:gluing}. 
It thus follows
that $\AA \models \phi(a)$ 
% for some domain element $a$ 
iff $\phi(a)$ is in the 
least fixpoint of $\P \cup \AATD$.

Finally, an inspection of all datalog rules
added to $\P$ by
this construction
shows that these rules are indeed quasi-guarded, i.e., they
all contain an atom $B$ with an extensional predicate, s.t.\
all other variables in this rule are functionally dependent
on the variables in $B$. For instance, in the rule added to 
$\ThetaUp$ in case of a branch node, the atom 
$bag(v,x_0, \ldots, x_w)$ is the quasi-guard. Indeed, the remaining
variables $v_1$ and $v_2$ in this rule are functionally dependent
on $v$ via the atoms $\child_1(v_1,v)$ and $\child_2(v_2,v)$.
% , respectively.
$\punto$
\end{proof}

\medskip

Above all, Theorem~\ref{theo:MSO-to-Datalog} is
an expressivity result. However, it can of course
be used to derive also a complexity result.
Indeed, we can state a slightly extended version of
Courcelle's Theorem  
as a corollary (which is in turn a special
case of Theorem 4.12 in \cite{flum-frick-grohe-JACM02}).

\begin{corollary}
The evaluation 
problem of unary MSO-queries $\phi(x)$ over
$\tau$-structures $\AA$ with treewidth $w$ 
can be solved
in
time ${\cal O}(f(|\phi(x)|, w) * |\A|)$ for some
function $f$.
\end{corollary}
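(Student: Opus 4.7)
The plan is to derive this corollary by composing the three main building blocks that are already at our disposal: Bodlaender's linear-time tree decomposition algorithm plus the normalization Proposition, the MSO-to-datalog transformation of Theorem~\ref{theo:MSO-to-Datalog}, and the evaluation bound for quasi-guarded datalog programs given by Theorem~\ref{theo:mondatalog-complexity}.

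First, given the MSO formula $\phi(x)$ with free variable $x$, I would invoke Theorem~\ref{theo:MSO-to-Datalog} to produce an equivalent quasi-guarded monadic datalog program $\P$ over $\tauTD$ with distinguished unary predicate $\phi$. Crucially, the construction in that theorem depends only on $\phi(x)$ (in particular, its quantifier depth $k$) and on the treewidth $w$ and signature $\tau$; it does not look at the input structure at all. Hence the size of $\P$ is bounded by some function $f(|\phi(x)|,w)$, which is exactly the factor appearing in the claimed running time.

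Second, given an input $\tau$-structure $\AA$ with $\tw(\AA)\le w$, I would apply Bodlaender's algorithm \cite{Bod96} to obtain a width-$w$ tree decomposition $\TT$ in time $\mathcal{O}(|\AA|)$, then normalize it via the Proposition of Section~\ref{sec:preliminaries} (also in linear time, preserving the width). From $\AA$ and the normalized $\TT$ one assembles the $\tauTD$-structure $\AATD$, again in linear time, and with $|\AATD|=\mathcal{O}(|\AA|)$.

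Third, apply Theorem~\ref{theo:mondatalog-complexity} to evaluate $\P$ on $\AATD$; since $\P$ is quasi-guarded, this takes time $\mathcal{O}(|\P|\cdot|\AATD|)=\mathcal{O}(f(|\phi(x)|,w)\cdot|\AA|)$. By the correctness part of Theorem~\ref{theo:MSO-to-Datalog}, the atoms $\phi(a)$ in the least fixpoint of $\P\cup\AATD$ are exactly the elements $a\in\dom(\AA)$ with $\AA\models\phi(a)$, which settles the claim. There is no genuine obstacle here; everything has been prepared by the preceding sections. The only point that deserves a brief remark is to observe that the dependence on $\phi$ and $w$ in $f$ comes entirely from step one (the number of $k$-types and the size of the rules generated per type), whereas the linear dependence on $|\AA|$ comes from combining the linear-time tree decomposition with the linear-time evaluation of quasi-guarded datalog.
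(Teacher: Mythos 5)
Your proof is correct and follows essentially the same route as the paper: construct the quasi-guarded program via Theorem~\ref{theo:MSO-to-Datalog} (data-independent, hence size bounded by $f(|\phi(x)|,w)$), build $\AATD$ in linear time via \cite{Bod96}, and evaluate in time ${\cal O}(|\P|\cdot|\AATD|)$ by Theorem~\ref{theo:mondatalog-complexity}. The only difference is that you spell out the normalization of the tree decomposition explicitly, which the paper leaves implicit.
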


\begin{proof}
Suppose that we are given 
an MSO-query $\phi(x)$ and some treewidth $w$.
By Theorem~~\ref{theo:MSO-to-Datalog}, 
we can construct an equivalent, quasi-guarded datalog
program $\P$. The whole construction is independent
of the data. Hence, the time for this construction and the 
size of $\P$ are both bounded by some term $f(|\phi(x)|, w)$.
By \cite{Bod96}, a tree decomposition $\TT$ of $\AA$ and, therefore, 
also the extended structure $\AATD$ can be computed in 
time ${\cal O}(|\A|)$. Finally, 
by Theorem~\ref{theo:mondatalog-complexity}, the 
quasi-guarded program $\P$ can be evaluated
over $\AATD$ in time ${\cal O}(|\P| * |\AATD|)$, from 
which the desired overall time bound follows. 
$\punto$
\end{proof}

\noindent
{\em Discussion.\/} Clearly, Theorem~\ref{theo:MSO-to-Datalog} is 
not only applicable to MSO-definable {\em unary queries\/} but also to 
{\em $0$-ary queries\/}, i.e., MSO-queries defining a decision
problem. An inspection of the proof of Theorem~\ref{theo:MSO-to-Datalog} reveals that several simplifications
are possible in this case. Above all, the whole
``top-down'' construction of $\ThetaD$ can be omitted. 
Moreover,
the rules with head predicate $\phi$ are now much simpler: Let 
$\phi$ be a $0$-ary MSO-formula and let 
$\ThetaUp$ denote the set of types obtained by the 
``bottom-up'' construction in the above proof. Then we define
$\ThetaUp_0 = \{ \theta \mid W(\theta) = \ASs$ and 
$\AA \models \phi\}$. 
%Let $\ThetaUp_0 = \{\theta_{i_1}, \dots, \theta_{i_m} \}$. 
Finally, we add the following set of rules with head predicate $\phi$
to our datalog
program:
\[
\begin{array}{lll}
\phi  & \la & \root(v), \theta_0(v).  
\end{array}
\]
for every $\theta_0 \in \ThetaUp_0$. We shall make use of these
simplifications in Section \ref{sec:three-col} and
\ref{sec:primality} when we present new algorithms for 
two decision problems. In contrast, these simplifications are no longer possible when we consider an enumeration problem in 
Section~\ref{sec:monadic-prime}. In particular, the ``top-down'' construction
will indeed be required then.

%%%%%%%%%%%%%%%%%%%%%%%%%%%%%%%%%%%%%%%%%%%%%%%%%%%%%%%%%%%%%%
%%%%%%%%%%%%%% SAT %%%%%%%%%%%%%%%%%%%%%%%%%%%%%%%%%%
\section{Monadic Datalog at Work}
\label{sec:Work}

We now put monadic datalog to work
by constructing several new algorithms. We start off with
a simple example, namely the 3-Colorability problem, which will help to illustrate the basic ideas, see Section \ref{sec:three-col}. Our ultimate goal is to 
tackle two more  involved problems, namely the 
PRIMALITY decision problem and the 
PRIMALITY enumeration problem,
see Sections \ref{sec:primality} and 
\ref{sec:monadic-prime}. All these problems 
are well-known to be intractable. However, 
since they are expressible in 
MSO over appropriate structures, they are
fixed-parameter tractable w.r.t.\
the treewidth. In this section, we show that these
problems admit succinct and efficient
solutions via datalog.

Before we present our datalog programs,
we slightly modify the notion of normalized tree decompositions from 
Section~\ref{sec:treewidth}.
Recall that an {\em element replacement node} 
replaces exactly one element in the bag of the 
child node by a new element. 
For our algorithms, it is preferable to split this action into two steps, namely, an 
{\em element removal node}, which removes one domain element
from the bag of its child node, and an 
{\em element introduction node}, which introduces one new element. 
Moreover, it is now preferable
to consider the bags as sets
of domain elements rather than as tuples. Hence,
we may delete permutation nodes from the tree decomposition.
Finally, we drop the condition that all bags in a tree decomposition
of width $w$  must have ``full size'' $w+1$
(by splitting the element replacement into 
element removal and element introduction, this condition would have 
required some relaxation anyway). Such a normal form
of tree decompositions was also considered in \cite{Kloks94}.
For instance, recall the tree decomposition ${\cal T}'$ from 
Figure~\ref{fig:normalized-tree-decomp}. A tree decomposition 
${\cal T}''$ compliant with our modified notion of {\em 
normalized tree decompositions\/} is depicted in 
Figure~\ref{fig:normalized-tree-decomp-no-perm}.

\begin{figure}[t]
\begin{center}
\includegraphics[scale=0.4]{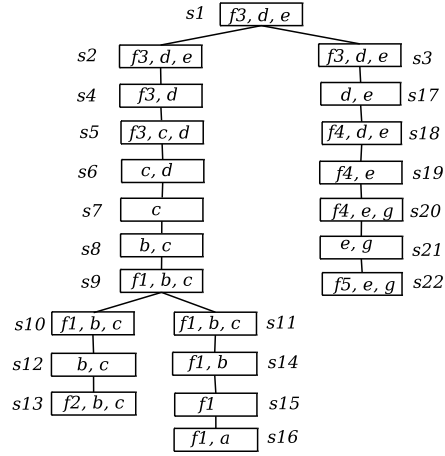}
\end{center}
\vspace{-0.5cm}
\caption{Modified normal form of tree decompositions.}
\label{fig:normalized-tree-decomp-no-perm}
\end{figure}

\subsection{The 3-Colorability Problem}
\label{sec:three-col}

Suppose that a graph $(V,E)$ with vertices $V$ and
edges $E$ is given as a $\tau$-structure
with $\tau = \{ e\}$, i.e., $e$ is the binary edge relation. This graph is 3-colorable, iff there exists a partition of $V$ into three sets $\RR$, $\GG$, 
$\BB$, s.t.\
no two adjacent vertices $v_1, v_2 \in V$ 
are in the same set
$\RR$, $\GG$, or $\BB$. This criterion can be easily expressed
by an MSO-sentence, namely

\begin{eqnarray*}
\phi & \equiv & \exists R
\exists G \exists B [
Partition(R,G,B) \wedge \mbox{}
\forall v_1 \forall v_2 
[e(v_1,v_2) 
\ra
\\
& &(\neg R(v_1) \vee \neg R(v_2)) \wedge
(\neg G(v_1) \vee \neg G(v_2)) \wedge
(\neg B(v_1) \vee \neg B(v_2)) 
\mbox{ with } \\
Partition(R,G,B) & \equiv & 
\forall v [[R(v) \vee G(v) \vee B(v)] \wedge
\\
& & (\neg R(v) \vee \neg G(v)) \wedge
(\neg R(v) \vee \neg B(v)) \wedge
(\neg G(v) \vee \neg B(v))].
\end{eqnarray*}

\nop{******************************
\[
\begin{array}{lll}
\phi & \equiv & \exists R
\exists G \exists B [
Partition(R,G,B) \wedge \mbox{}
\forall v_1 \forall v_2 
[e(v_1,v_2) 
\ra
\\
& &(\neg R(v_1) \vee \neg R(v_2)) \wedge
(\neg G(v_1) \vee \neg G(v_2)) \wedge
(\neg B(v_1) \vee \neg B(v_2)) 
]]
\end{array}
\]

with 

\[
\begin{array}{lll}
Partition(R,G,B) & \equiv & 
\forall v [[R(v) \vee G(v) \vee B(v)] \wedge
\\
& & (\neg R(v) \vee \neg G(v)) \wedge
(\neg R(v) \vee \neg B(v)) \wedge
(\neg G(v) \vee \neg B(v))]
\end{array}
\]
******************************}%nop

Suppose that a graph $(V,E)$ 
together with a tree decomposition $\TT$ of width $w$
is given as 
a $\tauTD$-structure with 
$\tauTD = \{e,
\root, \leaf, 
\child_1, \child_2, \bag \}$.
In Figure~\ref{fig:Three-Col}, we describe a datalog program which takes such a 
$\tauTD$-structure as input and decides if the 
graph thus represented is 3-colorable.

\begin{figure}[t]
\begin{center}

\noindent
%\fbox{\begin{minipage}{3.40in}
\fbox{\begin{minipage}{5.8in}

{\bf Program} 3-Colorability

\small

%\smallskip
\vspace{2.2pt}

\noindent
/* leaf node. */

$\solve(s,R,G,B)$ $\la$  $\leaf(s)$, $\bag(s,X)$, 
$\partition(s,R,G,B)$,
$\allowed(s,R)$,
$\allowed(s,G)$, $\allowed(s,B)$.

%\smallskip
\vspace{2.2pt}

\noindent
/* element introduction node. */

$\solve(s,R \uplus \{v\},G,B)$ $\la$  $\bag(s,X\uplus \{v\})$, 
$\child_1(s_1,s)$, 
$\bag(s_1,X)$, 
$\solve(s_1,R,G,B)$
$\allowed(s,R \uplus \{v\})$.

\vspace{0.3pt}

$\solve(s,R,G\uplus \{v\},B)$ $\la$  $\bag(s,X\uplus \{v\})$, 
$\child_1(s_1,s)$, 
$\bag(s_1,X)$, 
$\solve(s_1,R,G,B)$
$\allowed(s,G \uplus \{v\})$.

\vspace{0.3pt}

$\solve(s,R,G,B\uplus \{v\})$ $\la$  $\bag(s,X\uplus \{v\})$, 
$\child_1(s_1,s)$, 
$\bag(s_1,X)$, 
$\solve(s_1,R,G,B)$
$\allowed(s,B \uplus \{v\})$.

%\smallskip
\vspace{2.2pt}

\noindent
/* element removal node. */ 

$\solve(s,R,G,B)$ $\la$  $\bag(s,X)$, 
$\child_1(s_1,s)$, 
$\bag(s_1,X\uplus \{v\})$, 
$\solve(s_1,R\uplus \{v\},G,B)$.

\vspace{0.3pt}

$\solve(s,R,G,B)$ $\la$  $\bag(s,X)$, 
$\child_1(s_1,s)$, 
$\bag(s_1,X\uplus \{v\})$, 
$\solve(s_1,R,G\uplus \{v\},B)$.

\vspace{0.3pt}

$\solve(s,R,G,B)$ $\la$  $\bag(s,X)$, 
$\child_1(s_1,s)$, 
$\bag(s_1,X\uplus \{v\})$, 
$\solve(s_1,R,G,B\uplus \{v\})$.

%\smallskip
\vspace{2.2pt}

\noindent
/* branch node. */

$\solve(s,R,G,B)$ $\la$  $\bag(s,X)$, 
$\child_1(s_1,s)$, $\child_2(s_2,s)$,
$\bag(s_1,X)$, 
$\bag(s_2,X)$, 
$\solve(s_1,R,G,B)$,

~~~~~
$\solve(s_2,R,G,B)$.

%\smallskip
\vspace{2.2pt}

\noindent
/* result (at the root node). */

$\succ \la \root(s)$, $\solve(s,R,G,B)$.

\normalsize

\end{minipage}}

\caption{3-Colorability Test.}
\label{fig:Three-Col}
\end{center}
\end{figure}

Some words on the notation used in this program are in order:
We are using lower case letters $s$ and $v$ (possibly with subscripts) 
as datalog variables for a
single node in $\TT$ and for a single vertex in $V$, 
respectively. In contrast, upper case letters $X$, 
$R$, $G$, and $B$ 
are used as datalog variables denoting sets of vertices.
Note that these sets are not sets in the general sense, 
since their cardinality is
restricted by the size $w+1$ of the bags, where
$w$ is a fixed constant. Hence,  
these ``fixed-size'' sets can be simply  implemented
by means of 
$k$-tuples with $k \leq (w+1)$ over $\{0,1\}$.
For the sake of readability, 
we are using non-datalog expressions 
with the set operator
$\uplus$ (disjoint union).
For the fixed-size sets 
under consideration here, one could, of course, easily
replace this operator by ``proper'' datalog expressions
of the form $\d-union(R,\{v\}, R')$.

It is convenient to 
introduce the following notation. 
Let $G = (V,E)$ be the input graph
with tree decomposition $\TT$.
For any node $s$ in $\TT$, 
we write as usual $\Ts$ to denote the subtree of $\TT$ rooted at $s$. Moreover, we write
$V(s)$ and $V(\Ts)$ to denote the vertices in the bag of $s$ respectively in any bag in $\Ts$. 

Our 3-Colorability-program checks if $G$ is 
3-colorable via the criterion mentioned above, i.e., 
there exists a partition of $V$ into three sets $\RR$, $\GG$, 
$\BB$, s.t.\
no two adjacent vertices $v_1, v_2 \in V$ 
are in the same set
$\RR$, $\GG$, or $\BB$.

At the heart of this program is the intensional predicate 
$\solve(s, R,G,B)$ with the following intended meaning:
$s$ denotes a node in $\TT$ and 
$R$, $G$, $B$ are the projections of 
$\RR$, $\GG$, $\BB$
onto $V(s)$. 
For 
all values
$s, R,G,B$,
the ground fact 
$\solve(s, R,G,B,)$
shall be in the 
least fixpoint 
of the program plus the input structure, iff the 
following condition holds:

\smallskip

\noindent
{\sc Property A.} 
There exist extensions $\ER$ of $R$, 
$\EG$ of $G$, and $\EB$ of $B$ to 
$V(\Ts)$, s.t.\ 
\begin{enumerate}
\setlength{\itemsep}{-0.5mm}
\item $\ER$, $\EG$, and $\EB$ form a partition of $V(\Ts)$ and
\item no two adjacent vertices $v_1, v_2 \in V(\Ts)$
are in the same set
$\ER$, $\EG$, or $\EB$.
\end{enumerate}
%
%\smallskip
%
\noindent
In other words, $\ER$, $\EG$, and $\EB$ is a valid
3-coloring of the vertices in $V(\Ts)$ and 
$R$, $G$, and $B$ are the projections of 
$\ER$, $\EG$, and $\EB$ onto $V(s)$.

The main task of the program is the computation of all facts
$\solve(s, R, G, B)$ via a bottom-up traversal of the
tree decomposition. The other predicates have the following meaning: 
\begin{itemize}
\setlength{\itemsep}{-0.5mm}
\item $\partition(s,R,G,B)$ is in the 
least fixpoint 
iff $R$, $G$, $B$ is a partition of the bag $X$ at
node $s$ in the tree decomposition. 
\item 
$\allowed(s,X)$ is in the 
least fixpoint iff $X$ contains no 
adjacent vertices $v_1, v_2$.
\end{itemize}
Recall that the cardinality of the sets $X$, $R$, $G$, $B$ occurring as arguments of $\partition$ and
$\allowed$ is bounded by
the fixed constant $w + 1$. 
In fact, both the  $\partition$ predicate and the 
$\allowed$ predicate can  be 
treated as extensional predicates by computing all
facts $\partition(s,R,G,B)$ and $\allowed(s,X)$ 
for each node $s$ in $\TT$ as part of the computation of the tree decomposition. This additional computation also fits into the linear time bound.

The intuition of the rules with the \solve-predicate in the head 
is now clear: At the {\em leaf nodes\/}, the 
program generates ground facts $\solve(s,R,G,B)$ for all possible partitions
of the bag $X$ at $s$, such that none of the sets $R, G, B$ contains 
two adjacent vertices. The three rules for 
{\em element introduction nodes\/} distinguish the three cases if 
the new vertex $v$ is added to $R$, $G$, or $B$, respectively. 
Of course, by the \allowed-atom in the body of these 3 rules, the attempt to 
add $v$ to any of the sets $R$, $G$, or $B$ may fail.
The three rules for 
{\em element removal nodes\/} distinguish the three cases if 
the removed vertex was in $R$, $G$, or $B$, respectively. The rule for 
{\em branch nodes\/} combines \solve-facts with identical values of 
$(R, G, B)$ at the child nodes $s_1$ and $s_2$ to the corresponding
\solve-fact at $s$.

In summary, the 3-colorability-program has the following properties.

\begin{theorem}
\label{theo:three-col}
The datalog program 
in Figure~\ref{fig:Three-Col} decides the 
3-Colorability problem, i.e.,
the fact ``success'' is in the 
least fixpoint of this program  plus the input $\tauTD$-structure $\AATD$ 
iff $\AATD$ 
encodes a 
3-colorable graph
$(V,E)$.
Moreover, for any graph $(V,E)$ with treewidth $w$, 
the computation of the $\tauTD$-structure $\AATD$ and the evaluation of 
the program can be done in time
${\cal O}(f(w) * |(V,E)|)$ for some function $f$.
\end{theorem}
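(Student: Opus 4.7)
The plan is to prove the two parts of the theorem — correctness and the complexity bound — in sequence. For correctness, I would establish Property~A for the $\solve$ predicate by a bottom-up induction on the normalized tree decomposition $\TT$, and then derive the main claim about $\succ$ by applying Property~A at the root $r$, where $V(\T_r)=V$. For the complexity bound, I would check that each rule of the program is quasi-guarded and count ground instances.

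For the inductive proof of Property~A, I would treat the four node types of the (modified) normal form in turn. At a \emph{leaf node} $s$ we have $V(\Ts)=V(s)$, and the leaf rule derives $\solve(s,R,G,B)$ exactly for those partitions of $V(s)$ into three edge-free classes, which matches Property~A directly. At an \emph{element introduction node} $s$ with child $s_1$ and new vertex $v$, the three introduction rules generate $\solve(s,R',G',B')$ from $\solve(s_1,R,G,B)$ by placing $v$ into one of $R,G,B$ subject to $\allowed$; by the inductive hypothesis together with the fact that any edge of $(V,E)$ incident to $v$ and lying inside $V(\Ts)$ must have its other endpoint in $V(s)$ (from the bag-edge condition and connectedness), this exactly corresponds to extending a witness on $V(\T_{s_1})$ to $V(\Ts)$. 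The \emph{element removal node} case is the converse projection, handled by the three removal rules which distinguish the possible former color of the removed vertex. At a \emph{branch node} $s$ with children $s_1,s_2$ sharing the bag $V(s)$, the rule couples two $\solve$-facts that agree on $(R,G,B)$; the structural fact needed here is that connectedness forces $V(\T_{s_1})\cap V(\T_{s_2})=V(s)$ and that every edge inside $V(\Ts)$ lies wholly in one of the two sub-induced graphs, so that witnesses $\ER_i,\EG_i,\EB_i$ on $V(\T_{s_i})$ that agree on $V(s)$ can be glued to a single valid 3-coloring of $V(\Ts)$, and conversely any witness at $s$ restricts to witnesses at $s_1$ and $s_2$.

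Combining Property~A at the root with the final rule then shows that $\succ$ is derived iff $(V,E)$ admits a 3-coloring, giving the claimed equivalence. For the complexity bound, \cite{Bod96} yields $\AATD$ in time ${\cal O}(|(V,E)|)$, and the auxiliary predicates $\partition$ and $\allowed$ can be precomputed in time ${\cal O}(f_0(w)\cdot|V|)$ since each depends only on a single bag. Every rule of the program is quasi-guarded, with the $\bag$ atom acting as the guard: the set arguments $R,G,B$ are subsets of the (at most $(w+1)$-element) bag, and so for each node $s$ they range over at most $2^{3(w+1)}$ tuples. Theorem~\ref{theo:mondatalog-complexity} (or a direct count of ground instances) then yields the overall time bound ${\cal O}(f(w)\cdot|(V,E)|)$. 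The main obstacle in this plan is the branch-node case of the induction: one must invoke both the connectedness condition and the bag-edge condition of the tree decomposition to argue that no new monochromatic edges arise when the two sub-solutions are glued together.
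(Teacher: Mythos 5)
Your proposal is correct and follows essentially the same route as the paper's proof: Property~A for $\solve$ is established by structural induction on $\TT$ (which the paper declares immediate and omits, while you spell out the case analysis and the role of connectedness in the branch-node gluing), the equivalence for $\succ$ follows by applying Property~A at the root, and the linear time bound is obtained exactly as in the paper by viewing the program as a succinct quasi-guarded (monadic) program with the $\bag$ atom as guard and invoking Theorem~\ref{theo:mondatalog-complexity} together with \cite{Bod96} and the precomputation of $\partition$ and $\allowed$.
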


\begin{proof}
By the above considerations, it is clear that the predicate $\solve$ 
indeed has the meaning
described by Property A. A formal proof of this fact 
by structural
induction on $\TT$ is immediate and therefore omitted here.
Then the rule with head $\succ$ reads as follows:
$\succ$ is in the 
least fixpoint, iff $s$ denotes the root of $\TT$
and there 
exist extensions $\ER$, $\EG$, and $\EB$ 
of $R,G,B$ to $V(\Ts)$ (which is identical to $V$ in case of the root node $s$), s.t.\
$\ER$, $\EG$, and $\EB$ 
is a valid
3-coloring of the vertices in $V(\Ts) = V$.

For the linear time data complexity, the crucial observation
is that our program in Figure~\ref{fig:Three-Col} is essentially
a succinct representation of a quasi-guarded monadic datalog program.
For instance, in the atom 
$\solve(s, R, G, B)$, 
the  sets 
$R, G, B$ 
are subsets of the bag of $s$. Hence, 
each combination 
$R, G, B$ 
could be represented by 3 subsets 
$r_1, r_2, r_3$ over $\{0, \dots, w\}$ referring to indices of 
elements in the bag of $s$. 
Recall that $w$ is a fixed constant. Hence, 
$\solve(s, R, G, B)$ 
is simply a succinct representation of 
constantly
% (= exponentially w.r.t.\ $w$) 
many monadic predicates of the form
$\solve_{ \langle r_1, r_2, r_3\rangle } (s)$. The quasi-guard in each rule 
can thus be any atom with argument $s$, e.g., $\bag(s,X)$
(possibly extended by a disjoint union with $\{v\}$).
Thus, the linear time bound follows immediately from 
Theorem~\ref{theo:mondatalog-complexity}.
$\punto$
\end{proof}

\noindent
{\em Discussion.}
Let us briefly compare the monadic program 
constructed
in the proof of Theorem~\ref{theo:MSO-to-Datalog} with 
the 3-Colorability program in Figure~\ref{fig:Three-Col}.
Actually, since we are dealing with a decision problem here, we
only look at the bottom-up construction in the proof of Theorem \ref{theo:MSO-to-Datalog}, since the top-down construction 
is not needed for a 0-ary target formula $\phi()$.
As was already mentioned in the proof of 
Theorem~\ref{theo:three-col}, 
the atoms $\solve(s, R, G, B)$ can be 
thought of as a succinct representation for
atoms  of the form 
$\solve_{ \langle r_1, r_2, r_3\rangle } (s)$. 
Now the question naturally arises where
the type $\theta$ of some node $s$ from the proof of Theorem \ref{theo:MSO-to-Datalog}
is present in the 3-Colorability program. 
A first tentative answer is that this type
essentially corresponds to the set 
$R(s) = \{\langle r_1, r_2, r_3\rangle \mid 
\solve_{ \langle r_1, r_2, r_3\rangle } (s)$ is in the least fixpoint$\}$.
However, there are two significant aspects which distinguish
our 3-Colorability program from merely a succinct representation of the 
type transitions encoded in the 
monadic datalog program of Theorem \ref{theo:MSO-to-Datalog}: 
\begin{enumerate}
\item 
By Property A, we are only interested in the types of those structures
which -- in principle --  could be extended in bottom-up direction 
to a structure representing a satisfiable propositional formula. Hence, 
in contrast to the construction in the proof of Theorem~\ref{theo:MSO-to-Datalog}, 
our 3-Colorability program does clearly not 
keep track of all possible types that the substructure induced by some
tree decomposition $\Ts$
may possibly have.
\item  
$R(s) = \{\langle r_1, r_2, r_3\rangle \mid 
\solve_{ \langle r_1, r_2, r_3\rangle } (s)$ is in the least fixpoint$\}$
does not exactly correspond to the type of $s$. Instead, it only describes 
the crucial properties of the type. Thus, the 3-Colorability program
somehow ``aggregates'' several types from the proof of Theorem \ref{theo:MSO-to-Datalog}.
\end{enumerate}
These two properties ensure that the 3-Colorability program is much shorter 
than the program in the proof of Theorem~\ref{theo:MSO-to-Datalog} and that
the difference between these two programs is not just due to the 
succinct representation of a monadic program by a non-monadic one.
The deeper reason of this improvement is that we take the target MSO formula $\phi$
(namely, the characterization of 3-Colorability) into account for the 
entire construction of the datalog program
in Figure~\ref{fig:Three-Col}. In contrast, the rules describing
the type-transitions in the proof of Theorem~\ref{theo:MSO-to-Datalog}
for a bottom-up traversal of the tree decomposition
are fully generic. Only the rules with head predicate $\phi$ are
specific to the actual target MSO formula $\phi$.

\subsection{The Primality Decision Problem}
\label{sec:primality}

Recall from Section~\ref{sec:treewidth} that we 
represent a relational schema $(R,F)$
as a $\tau$-structure with $\tau = \{\fd, \att, \lh,$ $\rh
 \}$. 
Moreover, recall that, in Section~\ref{sec:Work},  
we consider normalized tree decompositions with 
{\em element removal nodes} and 
{\em element introduction nodes} 
rather than {\em element replacement nodes} as in 
Section~\ref{sec:treewidth}. 
With our representation of relational schemas $(R,F)$ as finite structures, 
the domain elements are the attributes and FDs in $(R,F)$. 
Hence, in total, the former element replacement nodes give rise 
to four kinds of nodes, namely, attribute removal nodes, 
FD removal nodes, attribute introduction nodes, and
FD introduction nodes. Moreover, 
we now  consider the bags as a {\em pair of sets\/} $(\At, \Fd)$, where 
$\At$ is a set attributes and
$\Fd$ 
is a set of FDs.
Again, 
we may delete permutation nodes from the tree decomposition.
Finally, it will greatly simplify the presentation of 
our datalog program if we require that, whenever an FD $f \in F$ 
is contained in a bag of the tree decomposition, then 
the attribute $\rhs(f)$ is as well. In the worst-case, this may 
double the width of the resulting decomposition.

Suppose that a schema $(R,F)$ 
together with a tree decomposition $\TT$ of width $w$
is given as 
a $\tauTD$-structure with 
$\tauTD = \{\fd, \att, \lh, \rh, 
%\} \cup \{ 
\root, \leaf, 
\child_1, \child_2, \bag \}$.
In Figure~\ref{fig:Primality}, we describe a datalog program,
where the input is given as an attribute $a \in R$ and 
a $\tauTD$-structure, s.t.\ $a$ occurs in the bag at the root of
the tree decomposition.

\begin{figure}[h!]
\begin{center}

\noindent
%\fbox{\begin{minipage}{3.40in}
\fbox{\begin{minipage}{5.80in}

{\bf Program} PRIMALITY 

\small

%\smallskip
\vspace{2.2pt}

\noindent
/* leaf node. */

$\solve(s,Y,\FY, \C,\DC,\FC)$ $\la$  $\leaf(s)$, $\bag(s,\At,\Fd)$, 
$Y \cup \C = \At$, $Y \cap \C = \emptyset$, 
$\outside(\FY, Y,\At,\Fd )$,  

~~~
$\FC \subseteq \Fd$, $\consistent(\FC,\C)$, 
$\DC = \{\rhs(f) \mid f \in \FC\}$,
$\DC \subseteq \C$.

%\smallskip
\vspace{2.2pt}

\noindent
/* attribute introduction node. */

$\solve(s, Y \uplus \{b\},\FY,\C,\DC,\FC)$ $\la$  $\bag(s,\At \uplus \{b\},\Fd)$, 
$\child_1(s_1,s)$, 
$\bag(s_1,\At, \Fd)$, 

~~~
$\solve(s_1, Y,\FY,\C,\DC,\FC)$.

\vspace{0.3pt}

$\solve(s, Y,\FY,\C \uplus \{b\},\DC,\FC)$ $\la$  $\bag(s,\At \uplus \{b\},\Fd)$, 
$\child_1(s_1,s)$, 
$\bag(s_1,\At, \Fd)$, 

~~~
$\consistent(\FC,\C \uplus \{b\})$,
$\solve(s_1, Y,\FYone,\C,\DC,\FC)$,
$\outside(\FYtwo, Y,\At, \Fd)$,
$\FY = \FYone \cup \FYtwo$.

%\smallskip
\vspace{2.2pt}

\noindent
/* FD introduction node. */

$\solve(s, Y,\FY,\C,\DC,\FC)$ $\la$  $\bag(s,\At,\Fd \uplus \{f\})$, 
$\child_1(s_1,s)$, 
$\bag(s_1,\At, \Fd)$, 
$\rh(b,f)$, $b \in Y$,

~~~
$\solve(s_1, Y,\FY,\C,\DC,\FC)$.

\vspace{0.3pt}

$\solve(s, Y, \FY, \C, \DC \uplus \{b\},\FC \uplus \{f\} )$ 
   $\la$  $\bag(s,\At,\Fd \uplus \{f\})$, 
$\child_1(s_1,s)$, 
$\bag(s_1,\At, \Fd)$, 
$\rh(b,f)$,

~~~
$b \in \C$,
$\solve(s_1, Y,\FYone,\C,\DC,\FC)$,
$\consistent(\{f\},\C)$,
$\outside(\FYtwo, Y,\At, \{f\})$,
$\FY = \FYone \cup \FYtwo$.

\vspace{0.3pt}

$\solve(s, Y, \FY, \C, \DC,\FC )$ 
   $\la$  $\bag(s,\At,\Fd \uplus \{f\})$, 
$\child_1(s_1,s)$, 
$\bag(s_1,\At, \Fd)$, 
$\rh(b,f)$, $b \in \C$,

~~~
$\solve(s_1, Y,\FYone,\C,\DC,\FC)$,
$\outside(\FYtwo, Y,\At, \{f\})$,
$\FY = \FYone \cup \FYtwo$.

%\smallskip
\vspace{2.2pt}

\noindent
/* attribute removal node. */ 

$\solve(s, Y, \FY, \C, \DC,\FC )$ 
   $\la$  $\bag(s,\At,\Fd)$, 
$\child_1(s_1,s)$, 
$\bag(s_1,\At \uplus \{b\}, \Fd)$, 

~~~
$\solve(s_1, Y \uplus \{b\}, \FY, \C,\DC,\FC)$.

\vspace{0.3pt}

$\solve(s, Y, \FY, \C, \DC,\FC )$ 
   $\la$  $\bag(s,\At,\Fd)$, 
$\child_1(s_1,s)$, 
$\bag(s_1,\At \uplus \{b\}, \Fd)$,

~~~
$\solve(s_1, Y, \FY, \C \uplus \{b\},\DC \uplus \{b\},\FC)$.

%\smallskip
\vspace{2.2pt}

\noindent
/* FD removal node. */

$\solve(s, Y, \FY, \C, \DC,\FC )$ $\la$  $\bag(s,\At,\Fd)$, 
$\child_1(s_1,s)$, 
$\bag(s_1,\At, \Fd \uplus \{f\})$,  $\rh(b,f)$, $b \in Y$,

~~~~
$\solve(s_1, Y, \FY, \C, \DC,\FC )$.

\vspace{0.3pt}

$\solve(s, Y, \FY, \C, \DC,\FC )$ $\la$  $\bag(s,\At,\Fd)$, 
$\child_1(s_1,s)$, 
$\bag(s_1,\At, \Fd \uplus \{f\})$,  $\rh(b,f)$, $b \in \C$,

~~~~
$\solve(s_1, Y, \FY \uplus \{f\}, \C, \DC,\FC \uplus \{f\})$.

\vspace{0.3pt}

$\solve(s, Y, \FY, \C, \DC,\FC )$ $\la$  $\bag(s,\At,\Fd)$, 
$\child_1(s_1,s)$, 
$\bag(s_1,\At, \Fd \uplus \{f\})$,  $\rh(b,f)$, $b \in \C$,

~~~~
$\solve(s_1, Y, \FY \uplus \{f\}, \C, \DC,\FC )$, $f \not\in \FC$.

%\smallskip
\vspace{2.2pt}

\noindent
/* branch node. */

$\solve(s, Y, \FYone \cup \FYtwo, \C,\DC_1 \cup \DC_2,\FC)$ 
$\la$  $\bag(s,\At ,\Fd)$, 
$\child_1(s_1,s)$, $\bag(s_1,\At ,\Fd)$, 

~~~
$\child_2(s_2,s)$, $\bag(s_2,\At ,\Fd)$, 
$\solve(s_1, Y, \FYone, \C,\DC_1,\FC)$, 

~~~
$\solve(s_2, Y, \FYtwo, \C,\DC_2,\FC)$, 
$\unique(\DC_1,\DC_2,\FC)$.

%\smallskip
\vspace{2.2pt}

\noindent
/* result (at the root node). */

$\succ \la \root(s), \bag(s,\At,\Fd)$, 
$a \in \At$, 
$\solve(s, Y, \FY, \C, \DC,\FC )$, 
$a \not\in Y$, 

~~~
$\FY = \{ f \in \Fd \mid \rhs(f) \not\in Y\}$, 
$\DC =  \C \setminus \{a\}$.

\normalsize

\end{minipage}}

\caption{Primality Test.}
\label{fig:Primality}
\end{center}
\end{figure}

Analogously to 
Section~\ref{sec:three-col}, 
we are using lower case letters $s$, $f$, and $b$ (possibly with subscripts) 
as datalog variables for a
single node in $\TT$, for a single FD, 
or for a single attribute in $R$, 
respectively. Upper case letters are used as datalog variables denoting sets of attributes
(in the case of $Y, \At, \C, \DC$) or sets of FDs
(in the case of $\Fd, \FY, \FC$). In addition, $\C$ is considered as an
ordered set (indicated by the superscript $o$). When we write
$\C \uplus \{b\}$, we mean that $b$ is arbitrarily ``inserted'' 
into $\C$, leaving the order of the remaining elements unchanged.
Again, the cardinality of these (ordered) sets is
restricted by the size $w+1$ of the bags, where
$w$ is a fixed constant.
In addition to $\uplus$ (disjoint union)
we are now also using the set operators
$\cup$,
$\cap$, $\subseteq$, and $\in$. For the fixed-size (ordered) sets 
under consideration here, one could, of course, easily
replace these operators by ``proper'' datalog expressions. 
Moreover, for the 
input schema $(R,F)$ 
with tree decomposition $\TT$
we use the following notation:
We write 
$\FD(s)$ to denote the FDs in the bag of $s$ and
$\FD(\Ts)$ to denote the FDs that occur in any bag in $\Ts$. 
Analogously, we write 
$\Att(s)$ and $\Att(\Ts)$  as a short-hand for the attributes occurring 
in the bag of $s$ respectively in any bag in $\Ts$.

Our PRIMALITY-program checks the primality of $a$ by 
via the criterion used for the MSO-characteri\-zation in
Example~\ref{bsp:primality-MSO}: We have to 
search for an attribute set $\YY \subseteq R$, s.t.\ $\YY$ is closed
w.r.t.\ $F$ (i.e., $\YY^+ = \YY$), $a \not\in \YY$ and
$(\YY \cup \{a\})^+ = R$, i.e., 
$\YY \cup \{a\}$ is a superkey but $\YY$ is not.

At the heart of our PRIMALITY-program is the intensional predicate 
$\solve(s, Y, \FY, \C, \DC,\FC )$ with the following intended meaning:
$s$ denotes a node in $\TT$. 
$Y$ (resp.\ $\C$) 
is the projection of 
$\YY$ (resp. of $R \setminus \YY$) onto $\Att(s)$. 
We consider $R \setminus \YY$ as ordered w.r.t.\ an appropriate
derivation sequence of $R$ from $\YY \cup \{a\}$, i.e., suppose
that $\YY \cup \{A_0\} \ra   \YY \cup \{A_0, A_{1} \} \ra 
\YY \cup \{A_{0}, A_{1}, A_{2} \} \ra
\dots \ra \YY \cup \{A_{0}, A_{1}, \dots, A_{n}  \}$, s.t.\
$A_0 = a$ and  $\YY \cup \{A_{0}, A_{1}, \dots, A_{n}\} = R$.
W.l.o.g., the $A_i$'s may be assumed to be pairwise distinct.
Then for any two $i \neq j$,
we simply set $A_i < A_j$ iff $i < j$.
By the connectedness condition on $\TT$, our datalog program 
ensures that the order on each subset $\C$ of 
$R \setminus \YY$ is consistent with the overall ordering.

The argument $\FY$ of the \solve-predicate is used to 
guarantee that $\YY$ is indeed closed. 
Informally, $\FY$ contains those
FDs in $\FD(s)$ for which we
have already verified (on the bottom-up traversal of the tree decomposition) 
that they do not constitute a contradiction
with the closedness of $\YY$. In other words, either 
$\rhs(f) \in \YY$ or there exists an attribute in 
$\lhs(f) \cap \At(\Ts)$ which is not in 
$\YY$.

The arguments $\DC$ and $\FC$ of the \solve-predicate are used to ensure that 
$(\YY \cup \{a\})^+ = R$ indeed holds: 
The intended meaning of the set $\FC$ 
is that it contains those FDs in $\FD(s)$ which are used in the 
above derivation sequence. 
Moreover, $\DC$ contains
those attributes from $\Att(s)$ for which we have already
shown that they can be derived from $\YY$ plus smaller 
atoms in $\C$. 

More precisely, for 
all values
$s, Y, \FY, \C, \DC,\FC$,
the ground fact 
$\solve(s, Y, \FY, \C, \DC,\FC )$
shall be in the 
least fixpoint 
of the program plus the input structure, iff the 
following condition holds:

\smallskip

\noindent
{\sc Property B.} 
There exist extensions $\EY$ of $Y$ and $\EC$ of $\C$ to 
$\Att(\Ts)$ and an extension $\EFC$ of $\FC$ to 
$\FD(\Ts)$, s.t.\ 
\begin{enumerate}
\setlength{\itemsep}{-0.5mm}
\item $\EY$ and $\EC$ form a partition of $\Att(\Ts)$,
\item $\forall f \in \FD(\Ts) \setminus \FD(s)$, if
$\rhs(f) \not\in \EY$, then $\lhs(f) \not\subseteq \EY$.
Moreover, 
$\FY = 
\{f \in \FD(s) \mid \rhs(f) \not\in \EY$
and $\lhs(f) \cap \Att(\Ts) \not\subseteq \EY\}$.
\item $\forall f \in \EFC$, 
$f$ is consistent with 
the order on $\EC$, i.e., $\forall f \in \EFC$:
$\rhs(f) \in \EC$ and 
$\forall b \in \lhs(f) \cap \EC$: $b < \rhs(f)$ holds.
\item $  \DC \cup \EC \setminus \Att(s) = 
\{\rhs(f) \mid f \in \EFC\}$,
\end{enumerate}
%
%\smallskip
%
\noindent
The main task of the program is the computation of all facts
$\solve(s, Y, \FY, \C, \DC,\FC )$ by means of a bottom-up traversal of the
tree decomposition. The other predicates have the following meaning: 
\begin{itemize}
\setlength{\itemsep}{-0.5mm}
\item $\outside(\FY, Y,\At,\Fd )$ is in the 
least fixpoint 
iff $\FY = 
\{f \in \Fd \mid \rhs(f) \not\in Y$ 
and  $\lhs(f) \cap \At \not\subseteq Y\}$, i.e., 
for every $f \in \FY$, $\rhs(f)$ is outside $Y$ but
this will never conflict with the closedness of
$Y$ because $\lhs(f)$ contains an attribute
from outside $Y$.
\item 
$\consistent(\FC,\C)$ is in the 
least fixpoint 
iff $\forall f \in \FC$ we have $\rhs(f) \in \C$ and
$\forall b \in \lhs(f) \cap \C$:
$b < \rhs(f)$, i.e., the FDs in 
$\FC$  are only used to derive greater attributes from
smaller ones (plus attributes from $\YY$).
\item  The fact $\unique(\DC_1,\DC_2,\FC)$ 
is in the 
least fixpoint iff the condition 
$\DC_1 \cap \DC_2 = \{ b \mid b = \rhs(f)$ for some $f\in \FC\}$ holds. 
The $\unique$-predicate is only used in the body of the rule for branch nodes.
Its purpose is to avoid that an attribute in $R \setminus \YY$ is derived via two different FDs in the two subtrees at the child nodes of the branch node.

\item 
The 0-ary predicate 
$\succ$ indicates if the fixed attribute $a$ is
prime in the schema encoded by the input structure.
\end{itemize}
The PRIMALITY-program has the following properties.

\begin{lemma}
\label{lemma:solve-correct}
The \solve-predicate has the intended meaning described above, i.e., 
for 
all values
$s$, $Y$, $\FY$, $\C$, $\DC$, $\FC$,
the ground fact 
$\solve(s, Y, \FY, \C, \DC,\FC )$
is in the 
least fixpoint 
of the PRIMALITY-program plus the input structure, iff Property B holds.
\end{lemma}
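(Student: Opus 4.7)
The plan is to prove both directions of the equivalence by structural induction on the subtree $\Ts$, processed bottom-up. Since the \solve-predicate is defined by rules whose bodies contain at most recursive \solve-atoms at the child node(s) together with extensional or bag-local atoms, the least-fixpoint derivability of $\solve(s,\dots)$ is equivalent to its derivability by a stratified bottom-up computation along $\Ts$. The auxiliary predicates $\outside$, $\consistent$, and $\unique$, together with the set operators $\cup$, $\cap$, $\uplus$, can be treated as evaluable built-ins because all their arguments are bounded in size by $w+1$.

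For the base case, $s$ is a leaf, so $\Att(\Ts)=\Att(s)$ and $\FD(\Ts)=\FD(s)$. The only candidate extensions are $\EY=Y$, $\EC=\C$, $\EFC=\FC$. Then the four clauses of Property~B reduce precisely to the guards of the leaf rule: (1) the partition condition $Y\cup\C=\At$ with $Y\cap\C=\emptyset$; (2) $\FY$ equals the $\outside$-set, since $\FD(\Ts)\setminus\FD(s)=\emptyset$ makes the universally quantified clause vacuous and the residual condition on $\FY$ collapses to the definition of $\outside$; (3) $\consistent(\FC,\C)$; (4) $\DC=\{\rhs(f)\mid f\in\FC\}\subseteq\C$. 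The base case is therefore immediate in both directions.

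For the induction step I would treat each kind of internal node in turn. The attribute-introduction rules split on whether the new element $b$ is placed into $\EY$ (first rule, which merely forwards \solve) or into $\EC$ (second rule, where the FDs in $\Fd$ that gain a $\lhs$-witness outside $\EY$ through $b$ are computed by $\outside(\FYtwo,Y,\At,\Fd)$ and unioned with the inherited $\FYone$, while $\consistent$ keeps $\FC$ compatible with the extended order on $\C$). The FD-introduction rules split according to whether $\rhs(f)\in\EY$ (so $f$ is automatically closedness-consistent), or $\rhs(f)\in\EC$ with $f\in\EFC$ (so $\DC$ and $\FC$ both grow by $\rhs(f)$ and $f$), or $\rhs(f)\in\EC$ with $f\notin\EFC$ (so $f$ is merely to be kept $\consistent$ and $\outside$-compatible). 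The removal rules are the dual projections; the two FD-removal rules further distinguish whether the removed $f$ belonged to $\EFC$ or not. In each subcase, given Property~B at the child one exhibits a matching rule instance whose guards hold, and conversely every firing of a rule at $s$ yields extensions at $s$ satisfying Property~B by invoking the induction hypothesis on the child.

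The genuine difficulty is the branch node. Here $\Att(\Ts)=\Att(\T_{s_1})\cup\Att(\T_{s_2})$ with intersection $\Att(s)$, and analogously for FDs, so by the connectedness condition any pair of extensions at $s_1$ and $s_2$ that agree on the bag at $s$ combines into a unique extension at $s$, and conversely. Two subtleties have to be addressed. First, condition~(2) of Property~B at $s$ is captured by the union $\FYone\cup\FYtwo$, because an $f\in\FD(s)$ has a $\lhs$-witness outside $\EY$ in $\Att(\Ts)$ iff it has one in $\Att(\T_{s_1})$ or in $\Att(\T_{s_2})$. Second, condition~(4) requires the guard $\unique(\DC_1,\DC_2,\FC)$: an attribute outside $\Att(s)$ can occur in only one of the $\DC_i$, while an attribute in $\Att(s)\cap\EC$ can occur in both $\DC_1$ and $\DC_2$ only if both sides derive it through the very same FD from $\FC$; otherwise a single attribute would be produced by two distinct FDs coming from disjoint subtrees, which is forbidden in a single derivation sequence. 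This is exactly what $\unique$ expresses. A parallel argument for the linear order on $\EC$ shows that the orders chosen on the two sides agree on $\C$ and can be interleaved on the disjoint portions outside $\Att(s)$ while preserving $\consistent(\EFC,\EC)$. Once this branch case is settled, both directions of the equivalence follow and the induction closes.
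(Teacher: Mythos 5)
Your proposal is correct and follows essentially the same route as the paper's own argument: a structural induction on the tree decomposition with a case analysis over the node types (leaf, attribute/FD introduction, attribute/FD removal, branch), matching each rule's guards against the clauses of Property~B, with the $\unique$-atom handling the only delicate point at branch nodes. The paper only gives this as a proof sketch, so your additional care with the base case and with merging the linear orders on $\EC_1$ and $\EC_2$ at branch nodes is a welcome (and compatible) elaboration rather than a different approach.
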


\begin{proofsketch}
The lemma can be shown by structural
induction on $\TT$. We restrict ourselves here
to outlining the ideas underlying the various rules 
of the PRIMALITY-program. The induction itself is then obvious and therefore omitted.

\smallskip

\noindent
(1) {\em leaf nodes.\/} The rule for a leaf node $s$ realizes
two ``guesses'' so to speak: (i) a partition of $\At(s)$ 
into $Y$ and $\C$ together with an ordering on $\C$ and
(ii) the subset $\FC \subseteq \Fd(s)$ of FDs which are 
used in the derivation sequence of $R \setminus \YY$
from $\YY \cup \{a\}$. The remaining variables are thus fully determined: $\FY$ is determined via the \outside-predicate, while $\DC$ is determined via the
equality $\DC = \{\rhs(f) \mid f \in \FC\}$. 
Finally the body of the rule contains the checks
$\consistent(\FC,\C )$ and
$\DC \subseteq \C$ to make sure that (at least at the 
leaf node $s$) the ``guesses'' are allowed.

\smallskip

\noindent
(2) {\em attribute introduction node.\/}
The two rules are used to distinguish 2 cases whether the new attribute $b$ is added to $Y$ or to $\C$. If $b$ is added to $Y$ then all arguments of the \solve-fact at the
child node $s_1$ of $s$ remain unchanged at $s$. In contrast, if $b$ is inserted into $\C$ then the following actions are required: 

The atom $\consistent(\FC,\C \uplus \{b\})$ makes sure that the rules in $\FC$  
are consistent with the ordering of $\C$, i.e., it must not happen that 
the new attribute $b$ occurs in $\lhs(f)$ for some $f \in FC$, s.t.\
$b > \rhs(f)$ holds. 

The new attribute $b$ outside $Y$ may possibly allow
us to verify for some additional FDs that they do not contradict the closedness of $\YY$. The atom $\outside(\FYtwo, Y,\At, \Fd)$ determines the set 
$\FYtwo$ which contains all FDs with $\rhs(f) \not\in Y$ but with some attribute from $\C$ (in particular, the new attribute $b$) 
in $\lhs(f)$.

Recall that we are requiring  that, whenever an FD $f \in F$ 
is contained in a bag of the tree decomposition, then 
the attribute $\rhs(f)$ is as well. Hence, since the attribute $b$ has
just been introduced on our bottom-up traversal of the tree decomposition, 
we can be sure that $b$ does not occur on the right-hand side of 
any FD in the bag of $s$. Thus, $\DC$ is not affected by the transition from
$s_1$ to $s$.

\smallskip

\noindent
(3) {\em FD introduction node.\/}
The three rules distinguish, in total, 3 cases: First, does 
$\rhs(f) \in Y$ or $\rhs(f) \in \C$ hold? (Recall that we assume that every bag containing some FD also contains the right-hand side of this FD.) The latter case is then further divided into the subcases if $f$ is used for the derivation 
of $R \setminus \YY$ or not. The first rule deals with the case $\rhs(f) \in Y$. Then all arguments of the \solve-fact at the
child node $s_1$ of $s$ remain unchanged at $s$. 

The second rule addresses the case that $\rhs(f) \in \C$ and $f$ is used for the derivation of $R \setminus \YY$.  
Then the attribute $\rhs(f)$ is added to $\DC$. The disjoint union makes sure that this attribute has not yet been derived by another rule with the same right-hand side. 
The atom $\consistent(\FC,\C \uplus \{b\})$
is used to check the consistency of $f$ with the ordering of $\C$.
The atom $\outside(\FYtwo, Y,\At, \Fd)$ is used to check if $f$ may be added to $\FY$, i.e., if some attribute in $\lhs(f)$ is in $\C$.

The third rule refers to the case that 
$\rhs(f) \in \C$ and $f$ is {\em not\/} 
used for the derivation of $R \setminus \YY$.
Again, the atom $\outside(\FYtwo, Y,\At, \Fd)$ is used to check if $f$ may be added to $\FY$.

\smallskip

\noindent
(4) {\em attribute removal node.\/}
The two rules are used to distinguish 2 cases whether the attribute $b$ was in $Y$ or in $\C$. If $b$ was in $Y$ then all arguments of the \solve-fact at the
child node $s_1$ of $s$ remain unchanged at $s$. In contrast, if $b$ 
was in $\C$ then we have to check (by pattern matching with the 
fact $\solve (s_1, \dots, \DC \uplus \{b\}, \dots)$) that 
a rule $f$ for deriving $b$ has already been found. Recall that, on our bottom-up traversal of $\TT$, when we first encounter an attribute $b$, 
it is either added to $Y$ or $\C$. If $b$ is added to $\C$ then we eventually have to determine the FD by which $b$ is derived. Hence, initially, $b$ is in $\C$ but not in $\DC$. However, when $b$ is finally removed from the bag
then its derivation must have been verified. 
The arguments $Y$, $\FY$, and $\FC$ are of course not affected by this
attribute removal.

\smallskip

\noindent
(5) {\em FD removal node.\/}
Similarly to the FD introduction node, we  distinguish, in total, 3 cases. 
If $\rhs(f) \in Y$ then all arguments of the \solve-fact at the
child node $s_1$ of $s$ remain unchanged at $s$. If 
$\rhs(f) \in \C$ then we further distinguish the subcases if $f$ is used for the derivation 
of $R \setminus \YY$ or not. The second and third rule refer two these two subcases. The action carried out by these two rules is the same, namely it has to be checked
(by pattern matching with the 
fact $\solve (s_1, \dots, \FY \uplus \{f\}, \dots)$) that 
$f$ does not constitute a contradiction with the 
closedness of $\YY$. In other words, since $\rhs(f) \in \C$, we must have encountered (on our bottom-up traversal of $\TT$) an attribute
in $\lhs(f) \not\in \YY$.

\smallskip

\noindent
(6) {\em branch node.\/}
Recall that a branch node $s$ and its two child nodes $s_1$ and $s_2$ have identical bags by our notion of normalized tree decompositions. 
The argument of the \solve-fact at $s$ is then determined from the arguments at 
$s_1 $ and $s_2$ as follows: The arguments $Y$ and $\C$ must have the same value at all three nodes $s$, $s_1$, and $s_2$. Likewise, $\FC$ (containing 
the FDs from the bags at these nodes which are used in the derivation of 
$R \setminus \YY$) must be identical. In contrast, $\FY$ and $\DC$ are obtained
as the union of the corresponding arguments in the \solve-facts at the 
child nodes $s_1$ and $s_2$, i.e., it suffices to verify at one of the child nodes $s_1$ or $s_2$ that some FD 
does not contradict the closedness of $Y$ and that some attribute in $\C$
is derived by some FD. 

Recall that we define an order on the attributes  in 
$R \setminus \YY$ by means of  some derivation sequence
of $R \setminus \YY$
from $\YY \cup \{a\}$. 
Hence, we   
we have to make sure that every attribute in $R \setminus \YY$ 
is derived only once in this derivation sequence.
In other words, for every $b \in  
R \setminus (\YY\cup \{a\})$, we use exactly one FD $f$ with $\rhs(f) = b$
in our derivation sequence.
The atom
$\unique(\DC_1,\DC_2,\FC)$ 
in the rule body ensures that 
no attribute in $R \setminus \YY$ is derived via two different FDs in the two subtrees at the child nodes of the branch node. 
$\punto$
\end{proofsketch}

\begin{theorem}
\label{theo:primtest}
The datalog program 
in Figure~\ref{fig:Primality} decides the PRIMALITY problem 
for a fixed attribute $a$, i.e.,
the fact ``success'' is in the 
least fixpoint of this program  plus the input $\tauTD$-structure $\AATD$ 
iff $\AATD$ 
encodes a relational
schema $(R,F)$, s.t.\ $a$ is part of a key.
Moreover, for any 
%relational 
schema $(R,F)$ with treewidth $w$, 
the computation of the $\tauTD$-structure $\AATD$ and the evaluation of 
the program can be done in time
${\cal O}(f(w) * |(R,F)|)$ for some function $f$.
\end{theorem}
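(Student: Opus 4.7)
The plan is to split the claim into a correctness part and a complexity part, leveraging Lemma~\ref{lemma:solve-correct} (which already fixes the semantics of the \solve-predicate via Property B) and Theorem~\ref{theo:mondatalog-complexity} (linear-time evaluation of quasi-guarded datalog), in the same spirit as the proof of Theorem~\ref{theo:three-col}.

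For correctness, I would first apply Lemma~\ref{lemma:solve-correct} to the root node $s$ of the tree decomposition, where $\Att(\Ts) = R$ and $\FD(\Ts) = F$. Spelling out Property~B at the root, a fact $\solve(s,Y,\FY,\C,\DC,\FC)$ is in the least fixpoint iff there exist $\EY \supseteq Y$, $\EC \supseteq \C$ partitioning $R$ and $\EFC \supseteq \FC$ in $F$, such that: for every $f \in \FD(\Ts) \setminus \FD(s)$, $\rhs(f) \in \EY$ or $\lhs(f) \not\subseteq \EY$; $\FY$ collects exactly the FDs in the root bag whose right-hand side lies outside $\EY$ and whose left-hand side is not contained in $\EY$; the FDs in $\EFC$ are consistent with the linear order on $\EC$; and $\DC \cup (\EC \setminus \Att(s)) = \{\rhs(f)\mid f\in \EFC\}$.

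Next I would analyse the final \succ-rule. The side condition $\FY = \{f \in \Fd \mid \rhs(f) \not\in Y\}$ at the root, combined with Property~B(2), forces $\EY$ to be closed under $F$ (any FD $f$ with $\rhs(f) \notin \EY$ must have $\lhs(f) \not\subseteq \EY$, regardless of whether $f \in \FD(s)$); here I would use our convention that whenever an FD appears in a bag so does its right-hand side, in order to identify $\FY$ with the set required by the rule. The side condition $\DC = \C \setminus \{a\}$ together with $a \not\in Y$ and Property~B(4) yields a derivation sequence witnessing $\EC \setminus \{a\} \subseteq (\EY \cup \{a\})^+$, and hence $(\EY \cup \{a\})^+ = R$. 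Thus \succ fires iff there is $\YY := \EY$ with $\YY$ closed, $a \notin \YY$, and $\YY \cup \{a\}$ a superkey; by the characterisation recalled in Example~\ref{bsp:primality-MSO}, this is precisely primality of $a$. The converse direction is witnessed by taking any key $K \ni a$ and setting $\YY := (K\setminus\{a\})^+$; an easy argument shows $a\notin \YY$ and $\YY\cup\{a\}\supseteq K$, so $\YY$ satisfies the characterisation, and one reconstructs the tuple $(Y,\FY,\C,\DC,\FC)$ at the root (with $\EFC$ being the FDs used in a fixed derivation of $R$ from $\YY\cup\{a\}$, ordered by that derivation) that makes the \succ-rule fire.

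For the complexity bound, I would first recall that by \cite{Bod96} a tree decomposition of width $w$ of $(R,F)$, and hence the structure $\AATD$, is computable in time ${\cal O}(|(R,F)|)$. The remaining argument is exactly the one used in the proof of Theorem~\ref{theo:three-col}: each argument of \solve other than $s$ is a (possibly ordered) subset of the bag of $s$ and therefore ranges over a constant number (depending only on $w$) of values. Consequently \solve is a succinct encoding of constantly many truly monadic predicates $\solve_{\langle Y,\FY,\C,\DC,\FC\rangle}(s)$, and each rule of the program becomes a set of monadic rules that is quasi-guarded by the $\bag$-atom on the ``current'' node $s$ (all other node variables, such as $s_1$, $s_2$, are functionally determined by $s$ through $\child_1$, $\child_2$). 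Theorem~\ref{theo:mondatalog-complexity} then gives an evaluation time of ${\cal O}(|\P|\cdot|\AATD|) = {\cal O}(f(w)\cdot |(R,F)|)$ for some function $f$, which, combined with the linear preprocessing, yields the claimed overall bound. The main obstacle is the careful verification that the side conditions of the \succ-rule, read through Property~B at the root, are indeed equivalent to the MSO characterisation of primality; once this is in place, the rest follows by the general machinery. $\punto$
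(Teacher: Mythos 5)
Your proposal is correct and follows essentially the same route as the paper: correctness via Lemma~\ref{lemma:solve-correct} (Property~B) instantiated at the root together with an analysis of the side conditions of the \succ-rule, and the linear time bound via the observation that the program is a succinct representation of a quasi-guarded monadic datalog program, so that Theorem~\ref{theo:mondatalog-complexity} (plus linear-time computation of $\AATD$) applies. You merely spell out in more detail than the paper why the conditions $\FY = \{ f \in \Fd \mid \rhs(f) \not\in Y\}$ and $\DC = \C \setminus \{a\}$ enforce closedness of $\EY$ and the superkey property, and how a key containing $a$ yields the witnessing tuple for the converse, which is a welcome elaboration rather than a deviation.
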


\begin{proof}
By Lemma \ref{lemma:solve-correct}, 
the predicate $\solve$ indeed has the meaning
according to Property B. Thus, the rule with head $\succ$ reads as follows:
$\succ$ is in the 
least fixpoint, iff $s$ denotes the root of $\TT$,
$a$ is an attribute in the bag at $s$, and 
$Y$ is the projection of the desired attribute set 
$\YY$ onto $\Att(s)$, i.e., (1) 
$\YY$ is closed (this is ensured by the condition
that $\{ f \in \Fd \mid \rhs(f) \not \in Y\} = \FY$),
(2) $a \not \in \YY$ and, finally,
(3) all attributes in $R \setminus (\YY \cup \{a\})$ are
indeed determined by $\YY \cup \{a\}$ (this is
ensured by the condition 
$\DC =  \C \setminus \{a\}$).

The linear time data complexity is due to 
the same argument as in the proof of 
Theorem \ref{theo:three-col}: 
our program in Figure~\ref{fig:Primality} is essentially
a succinct representation of a quasi-guarded monadic datalog program.
For instance, in the atom 
$\solve(s, Y, \FY, \C, \DC,\FC )$, 
the (ordered) sets 
$Y$, $\FY$, $\C$, $\DC$, and $\FC$ 
are subsets of the bag of $s$. Hence, 
each combination 
$Y$, $\FY$, $\C$, $\DC$, $\FC$
could be represented by 5 subsets resp. tuples
$r_1, \dots, r_5$ over $\{0, \dots, w\}$ referring to indices of 
elements in the bag of $s$. 
Recall that $w$ is a fixed constant. Hence, 
$\solve(s, Y, \FY, \C,$ $\DC,\FC )$, 
is simply a succinct representation of 
constantly
% (= exponentially w.r.t.\ $w$) 
many monadic predicates of the form
$\solve_{ \langle r_1, \dots, r_5\rangle } (s)$. The quasi-guard in each rule 
can thus be any atom with argument $s$, e.g., 
$\bag(s,\At,\Fd)$
(possibly extended by a disjoint union with $\{b\}$ or $\{f\}$, 
respectively).
Thus, the linear time bound follows immediately from 
Theorem~\ref{theo:mondatalog-complexity}.
$\punto$
\end{proof}

%%%%%%%%%%%%%%%%%%%%%%%%%%%%%%%%%%%%%%%%%%%%%%%%%%%%%%%%%%%
%%%%%%%%%%%%%% Abduction %%%%%%%%%%%%%%%%%%%%%%%%%%%%%%%%%%
\subsection{The Primality Enumeration Problem}
\label{sec:monadic-prime}

In order to extend the Primality algorithm from the previous section to 
a monadic predicate selecting all prime attributes in a schema, a naive
first attempt might look as follows: one can consider
the tree decomposition $\TT$ as rooted at various nodes, s.t.\ each $a \in R$ is
contained in the bag of one such root node. Then, for each $a$ and corresponding tree decomposition $\TT$, we run the algorithm from 
Figure~\ref{fig:Primality}.
Obviously, this method
has {\em quadratic} time complexity w.r.t.\ the data size. However, 
in this section, we describe a {\em linear\/} time algorithm.

The idea of this algorithm is to implement 
a top-down traversal of the tree decomposition in addition
to the bottom-up traversal realized by the program in 
Figure~\ref{fig:Primality}. For this purpose, we modify 
our notion of {\em normalized\/} tree decompositions in the following 
way: First, any tree decomposition can of course be
transformed in such a way that every attribute $a \in R$ 
occurs in at least one leaf node of $\TT$. 
Moreover, for every branch node $s$ in the tree decomposition, 
we insert a
new node $u$ as new parent of $s$, s.t.\ $u$ and $s$ have identical bags. Hence, together with the two child nodes of $s$, each branch node is
``surrounded'' by three neighboring nodes with identical bags. 
It is thus guaranteed that a branch node always has two child nodes with identical bags, no matter where $\TT$ is rooted. Moreover, this insertion of a new node also implies that the root node of $\TT$ is not a branch node.

We propose the following algorithm for computing a 
mo\-na\-dic predicate $\primality()$, which selects precisely the prime attri\-butes in $(R,F)$. In addition to the predicate $\solve$, whose meaning
was described by Property B in Section~\ref{sec:primality},
we also compute a predicate
\solveD, whose meaning is described by replacing every occurrence
of $\Ts$ in Property B by $\bar{\T}_s$. As the notation \solveDx suggests, the computation of 
\solveDx can be done via a top-down traversal of $\TT$. 
Note that \solveD($s, \dots)$  for a leaf node $s$ of $\T$ is exactly the same as
if we computed $\solve(s,\dots)$ for the tree rooted at $s$. 
Hence, we can define the predicate $\primality()$ as follows.

\smallskip

\begin{center}

\noindent
%\fbox{\begin{minipage}{3.30in}
\fbox{\begin{minipage}{5.40in}

{\bf Program} Monadic-Primality

\smallskip

\small

$\primality(a)$  $\la$ $\leaf(s)$, $\bag(s,\At,\Fd)$, 
$a \in \At$, 
\solveD$(s, Y, \FY, \C, \DC,\FC )$, 
$a \not\in Y$,

~~~~~~
$\FY = \{ f \in \Fd \mid \rhs(f) \not\in Y\}$,
$\DC =  \C \setminus \{a\}$.

\normalsize

\end{minipage}}
\end{center}

\smallskip

By the intended meaning of 
\solveDx and by the properties of
the Primality algorithm in Section \ref{sec:primality}, 
we immediately get the following result.

\begin{theorem}
\label{theo:relevance}
The monadic predicate $\primality()$ as defined above
selects precisely the prime attributes. Moreover,
it can be computed in linear time w.r.t.\ the 
size of the input structure.
\end{theorem}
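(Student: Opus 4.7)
The plan is to proceed in two stages: first establish that the auxiliary predicate \solveDx can indeed be computed by a top-down sweep of $\TT$ and is sound and complete for its intended meaning; second, use this to reduce correctness of the $\primality$ rule to the criterion already verified for the $\succ$ rule of Figure~\ref{fig:Primality}.

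For the first stage, I would write down rules for \solveDx that mirror those of Figure~\ref{fig:Primality} but traverse $\TT$ from root to leaves. Each node type (attribute or FD introduction or removal, branch) has a natural dual rule: we derive \solveD$(s',\ldots)$ at a child $s'$ of $s$ by combining \solveD$(s,\ldots)$ with a local consistency check or, for branch nodes, with the $\solve$-state of the sibling subtree. By the refinement of the normal form introduced at the start of Section~\ref{sec:monadic-prime}, every branch node has two child nodes with identical bags irrespective of where $\TT$ is rooted, so the dual branch rule has exactly the shape required by the symmetry of Property~B. Correctness of these rules---that \solveDx captures precisely the statement obtained by replacing $\T_s$ by $\bar{\T}_s$ in Property~B---follows by structural induction on $\bar{T}_s$, in complete analogy with Lemma~\ref{lemma:solve-correct} and reflecting the top-down traversal intuition already encoded by Lemma~\ref{lem:folklore-topDown}.

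Once \solveDx is available, correctness of $\primality$ is essentially a re-reading of the $\succ$ rule. By our modified normal form, every attribute $a \in R$ occurs in the bag of at least one leaf node $s$ of $\TT$. For such a leaf we have $T_s = \{s\}$ and hence $\bar{T}_s = T$, so a fact \solveD$(s, Y, \FY, \C, \DC, \FC)$ witnesses the existence of an extension $(\EY, \EC, \EFC)$ over the \emph{entire} structure $\AA$ satisfying the four conditions of Property~B. The remaining conditions in the body of the $\primality$ rule ($a \notin Y$, $\FY = \{f \in \Fd \mid \rhs(f) \notin Y\}$, and $\DC = \C \setminus \{a\}$) are exactly those used by the $\succ$ rule of Figure~\ref{fig:Primality} to enforce that $\EY$ is closed, that $a \notin \EY$, and that $\EY \cup \{a\}$ is a superkey. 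This is precisely the MSO characterization of primality from Example~\ref{bsp:primality-MSO}, so $\primality(a)$ is derivable iff $a$ is prime in $(R,F)$.

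For the linear-time bound, the same observation as in the proof of Theorem~\ref{theo:primtest} applies. All variables in the rules for \solveDx and for $\primality$ either range over nodes of $\TT$ or over subsets and tuples of bags, whose size is bounded by the constant $w+1$. Thus each rule is a succinct encoding of constantly many quasi-guarded monadic datalog rules with an appropriate $\bag$-atom serving as quasi-guard, and Theorem~\ref{theo:mondatalog-complexity} yields an ${\cal O}(f(w) \cdot |\AATD|)$ evaluation bound; combined with the linear-time computation of $\AATD$ via Bodlaender's algorithm, this gives the claimed overall bound. The main obstacle I foresee lies in the first stage: correctly formulating the top-down branch rule, since the \unique-style bookkeeping of Figure~\ref{fig:Primality} must be echoed in reverse so that no attribute in $R \setminus \EY$ ends up being derived twice nor ``lost'' between the two children.
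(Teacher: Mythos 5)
Your proposal is correct and follows essentially the same route as the paper, which treats the theorem as an immediate consequence of the intended meaning of \solveDx (computed top-down, with correctness mirroring Lemma~\ref{lemma:solve-correct} via Lemma~\ref{lem:folklore-topDown}), the observation that for a leaf $s$ one has $\bar{\T}_s = \T$ so the $\primality$ rule replays the $\succ$ rule of Theorem~\ref{theo:primtest}, and the same quasi-guardedness argument for the linear time bound. You in fact spell out more detail (including the branch-node bookkeeping caveat) than the paper's own one-line justification.
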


\section{Implementation and Results}
\label{sec:Results} 

To test our new datalog programs in terms of their scalability with 
a large number of attributes and rules, 
we have implemented the
Primality program from Section~\ref{sec:primality}
in C++. The experiments were conducted on Linux kernel 2.6.17
with an 1.60GHz Intel Pentium(M) processor and 512 MB of memory.
We measured the processing time of the Primality program
on different input parameters such as the number of attributes
and the number of FDs. The treewidth in all the test cases
was 3.
%  (i.e., at each tree node of the tree decomposition,
% there are maximal 4 elements (both 
% variables and clauses)).

\smallskip

\noindent
{\sc Test data generation.} Due to the lack of available
test data, we generated a balanced normalized tree decomposition.
Test data sets with increasing input parameters are then
generated by expanding the tree in a depth-first style. 
We have ensured that all different kinds of nodes
occur evenly in the tree decomposition.

\begin{table}
\begin{center}
%\begin{small}
\begin{tabular}{|c|r|r|r|r|r|}
\hline
tw&  \#Att& \#FD& \#tn& MD & MONA\\
\hline
3   & 3   &  1  &  3 &0.1& 650\\
3   & 6   &  2  &  12 &0.2& 9210\\
3   & 9   &  3  &  21 &0.4& 17930\\
3   & 12   &  4  &  34 &0.5& --\\
3   & 21  &  7  &  69 &0.8& --\\
3   & 33  &  11 & 105 &1.0& --\\
3   & 45  &  15 & 141 &1.2& --\\
3   & 57  &  19 & 193 &1.6& --\\
3   & 69  &  23 & 229 &1.8& --\\
3   & 81  &  27 & 265 &1.9& --\\
3   & 93  &  31 & 301 &2.2& --\\
\hline
\end{tabular}
\end{center}
%\end{small}
\caption{Processing Time in ms for PRIMALITY.}
\label{tab:experiments}
\end{table}
\smallskip

\noindent
{\sc Experimental results.} 
The outcome of the tests is shown in Table \ref{tab:experiments},
where tw stands for the 
treewidth;
\#Att, \#FD,  and \#tn 
stand for the number of attributes, 
FDs, and tree nodes, respectively.
The processing time (in ms) obtained with our
C++ implementation following the monadic datalog program
in Section~\ref{sec:primality} are
displayed in the column labelled ``MD''. The measurements
nicely reflect an essentially linear increase
of the processing time 
with the size of the input. Moreover, there is obviously no big
``hidden'' constant which would render the linearity useless.

In \cite{GPW06AAAI}, we proved the FPT of several non-monotonic 
reasoning problems via Courcelle's Theorem. Moreover, we 
also carried out some experiments with a prototype implementation 
using MONA (see \cite{mona}) for the MSO-model checking. 
We have now extended these experiments with MONA to the PRIMALITY 
problem. The time measurements of these experiments
are shown in the last column of Table~\ref{tab:experiments}. Due to problems discussed in \cite{GPW06AAAI}, MONA does not ensure linear
data complexity. Hence, all testes below line 3 of the table failed with
``out-of-memory'' errors. Moreover, also in cases where the exponential
data complexity does not yet ``hurt'',  our datalog approach outperforms the MSO-to-FTA approach by a factor of $1000$ or even more.

\smallskip

\noindent
{\sc Optimizations.} 
In our implementation, we have realized several
optimizations, which are highlighted below.
%which can be hardly achieved by the standard
%MSO-to-FTA approach.

%\begin{itemize}\item
{\em (1)  Succinct representation by non-monadic datalog.}
As was mentioned in the proofs of the Theorems~\ref{theo:three-col}
and \ref{theo:primtest}, our datalog programs 
can be regarded as succinct representations of 
big monadic datalog programs. If all possible ground instances
of our datalog rules had to be materialized, then 
we would end up with a ground program of the same size
as with the equivalent monadic program. However, it turns
out that the vast majority of possible instantiations is 
never computed since they are not ``reachable'' along the 
bottom-up computation.

{\em (2) General optimizations and lazy grounding.}
In principle, our implementation is based on the 
general idea of 
grounding followed by an evaluation of the ground program. This corresponds to 
the general technique to ensure linear time data complexity, 
cf.\ Theorem~\ref{theo:mondatalog-complexity}. 
A further improvement is achieved by the natural idea of
generating only those ground instances of rules which actually
produce new facts.

{\em (3)  Problem-specific optimizations of the non-monadic datalog
programs.}
In the discussion below Theorem~\ref{theo:three-col},
we have already mentioned
that the datalog programs presented in Section~\ref{sec:Work} 
incorporate several problem-specific optimizations. The underlying idea of these
optimizations is that 
many transitions which are kept track of by the 
generic construction in the proof of Theorem \ref{theo:MSO-to-Datalog}
(and, likewise, in the MSO-to-FTA approach) will not lead to 
a solution anyway. Hence, they 
are omitted in our
datalog programs right from the beginning.

{\em (4) Language extensions.} 
As was mentioned in Section~\ref{sec:Work}, we are using
language constructs (in particular, for handling sets of attributes
and FDs) which are not part of the datalog language. In principle, they could be realized in datalog. Nevertheless, we preferred an efficient
implementation of these constructs directly on C++ level. 
Further 
language 
extensions are conceivable and easy to realize.

{\em (5) Further improvements.} We are planning to implement further improvements. For instance, we are currently applying a strict 
bottom-up intuition as we compute new facts $\solve(v,\dots)$.
However, some top-down guidance in the 
style of magic sets so as not to compute all possible such 
facts at each level would be desirable.
Note that ultimately, at the root, 
only facts fulfilling certain conditions (like 
$a \not\in Y$, etc.) are needed in case that an attribute
$a$ is indeed prime.

\section{Conclusion} 
\label{sec:Conclusion} 

In this work, we have proposed a new approach based on monadic datalog 
to tackle a big class of fixed-parameter tractable problems. Theoretically, we
have shown that 
every MSO-definable unary query over finite structures with bounded
treewidth is also definable in monadic datalog. In fact, the 
resulting program even lies in a particularly efficient fragment 
of monadic datalog.
Practically, we have put this approach to work by applying it 
to 
the 3-Colorability problem and the 
PRIMALITY problem  with 
bounded treewidth. The experimental results 
thus obtained look very promising. They underline that
datalog with its potential for optimizations and its flexibility is
clearly worth considering for this class of problems.

Recall that the PRIMALITY problem is closely related to 
an important problem  in the area of artificial 
intelligence,  namely the relevance problem of 
propositional abduction (i.e., given a system description 
in form of a propositional clausal theory and observed symptoms, one
has to decide if some hypothesis is part of a possible
explanation of the symptoms). Indeed, if the clausal theory
is restricted to definite Horn clauses and if we are only interested in
minimal explanations, then the relevance problem is basically
the same as the problem of deciding primality in a subschema
$R' \subseteq R$. Extending our $\primality()$ program (and, in particular,
the $\solve()$-predicate) 
from Section~\ref{sec:Work}
so as to test primality in a subschema is  rather straightforward. On the other hand, extending such a program to 
abduction with arbitrary clausal theories (which is on the second level
of the polynomial hierarchy, see \cite{Eiter-Gottlob-Abduction-JACM95})
is much more involved. A monadic datalog program solving the relevance
problem also in this general case was presented in 
\cite{GPW08AAAI}.

Our datalog program in Section \ref{sec:Work} 
was obtained by an ad hoc construction rather than via a generic transformation from
MSO. Nevertheless, we are convinced that the idea of a bottom-up propagation of 
certain conditions is quite generally applicable. We are therefore 
planning to tackle many more problems, whose FPT was established via Courcelle's Theorem, with this new approach. 
We have already incorporated some optimizations into our implementation.
Further improvements are on the way 
(in particular, further heuristics to prune irrelevant
parts of the search space).

\bibliographystyle{abbrv}

%\bibliography{monadic}

%%%%%%%%%%%%%%%%%%%%%%%%%%%%%%%%%%%%%%%%%%%%%%%%%%%%%%%%%%%%%%%%%%%%%%

%\small

%\nop{******************

% ******************}%nop

\end{document}